\documentclass[aps,prx,superscriptaddress,twocolumn,twoside,nofootinbib,floatfix,a4paper]{revtex4-2} %
\pdfoutput=1
\usepackage[T1]{fontenc}
\usepackage[utf8]{inputenc}
\usepackage{amsmath}
\usepackage{mathtools}
\usepackage{microtype}
\usepackage{amssymb}
\usepackage{amsthm}
\usepackage{mathrsfs}
\usepackage{bm, dsfont}
\usepackage[usenames,dvipsnames]{color}
\usepackage{colortbl}
\usepackage[table]{xcolor}
\usepackage{enumitem}

\usepackage{graphicx}
\usepackage[colorlinks=true,linkcolor=blue,citecolor=blue,urlcolor=blue]{hyperref}
\usepackage{natbib}
\usepackage{hypcap}
\usepackage{verbatim, float}
\usepackage{psfrag}
\usepackage{tikz}
\usetikzlibrary{arrows.meta, automata,positioning,quotes}
\usepackage[normalem]{ulem}
\usepackage{physics}

\newtheorem{theorem}{Theorem}
\newtheorem*{theorem*}{Theorem}
\newtheorem{result}{Result}

\newtheorem*{corollary}{Corollary}
\newtheorem{observation}{Observation}

\newtheorem*{lemma*}{Lemma}
\newtheorem*{result*}{Result}

\usepackage{array} 
\usepackage{multirow} 

\usepackage{diagbox} 

\newcommand{\id}{\mathds{1}}
\newcommand{\LC}{\mathtt{LC}}

\begin{document}

\title{All pure entangled states can lead to fully nonlocal correlations}


\author{Martin J. Renner}
    \thanks{These two authors contributed equally.}
   \affiliation{ICFO - Institut de Ciencies Fotoniques, The Barcelona Institute of Science and Technology, 08860 Castelldefels, Spain}
\author{Edwin Peter Lobo }
     \thanks{These two authors contributed equally.}
    \affiliation{Laboratoire d'Information Quantique, Université libre de Bruxelles (ULB), Belgium}
\author{Arturo Konderak}
   \affiliation{Center for Quantum-Enabled Computing, Center for Theoretical Physics, Polish Academy of Sciences, al. Lotników 32/46, 02-668 Warsaw, Poland}
\author{Remigiusz Augusiak}
    \affiliation{Center for Quantum-Enabled Computing, Center for Theoretical Physics, Polish Academy of Sciences, al. Lotników 32/46, 02-668 Warsaw, Poland}
\author{Antonio Acín}
    \affiliation{ICFO - Institut de Ciencies Fotoniques, The Barcelona Institute of Science and Technology, 08860 Castelldefels, Spain}
    \affiliation{ICREA - Instituci{\'o} Catalana de Recerca i Estudis Avan\c cats, Llu\'\i s Companys 23, 08010 Barcelona, Spain}

\date{\today}

\begin{abstract}
    It is a well-established fact that some quantum correlations can be nonlocal, meaning that they cannot be described by a local hidden variable model. Certain quantum correlations have a form of nonlocality so strong that they cannot be reproduced even by models having an arbitrarily small local hidden variable component. These correlations are called fully nonlocal and lead to Bell inequalities in which the maximum quantum value saturates the non-signaling bound.
    A well-known example of this effect, which is also referred to as quantum pseudo-telepathy or all-versus-nothing proofs of nonlocality, is the quantum distribution fulfilling the Peres-Mermin square, in which the underlying state is a $4\times4$ dimensional maximally entangled state. Other examples of full nonlocality are known but, so far, all of them  are for maximally entangled states and it is an open question whether maximal entanglement is necessary for full nonlocality. In this work, we first establish a link between full nonlocality and the concept of antidistinguishability of quantum states. We use this connection to show that in every bipartite $d\times d$ Hilbert space, with $d\geq3$, there are non-maximally entangled states that are fully nonlocal. In fact, we derive simple sufficient conditions for full nonlocality that are only based on the smallest and largest Schmidt coefficients. We also show that in every dimension there exist pure entangled states that do not exhibit full nonlocality. Finally, we show that all pure entangled states can be activated to show full nonlocality in the many-copy scenario.
\end{abstract}

\maketitle

\section{Introduction}
Bell nonlocality is a striking feature of quantum theory and challenges our classical understanding of nature~\cite{Bell, Brunner_2014}. In a typical Bell experiment, the parties perform local measurements on a shared entangled quantum state. If the state and measurements are suitably chosen, then the resulting statistics cannot be explained by a local hidden variable (LHV) model. Typically, this is certified by violating a Bell inequality, where the quantum strategy outperforms every strategy that admits an LHV model. However, in most Bell experiments, the quantum bound of the Bell inequality is smaller than the maximum value that can be attained by general non-signaling theories. A well-known example is the Clauser-Horne-Shimony-Holt (CHSH) inequality where the Tsirelson bound is smaller than the non-signaling value that can be attained by Popescu-Rohrlich correlations~\cite{CHSH1969, Cirelson1980, popescu_quantum_1994}. This naturally raises the question of whether there exist Bell inequalities for which a quantum strategy can attain the largest possible value compatible with the no-signaling principle.

Famous examples of such Bell inequalities are the Mermin inequality~\cite{mermin_extreme_1990} for an odd number of parties, maximally violated by the  Greenberger, Horne, and Zeilinger (GHZ) state~\cite{Greenberger1989, ghz1990}, and the Peres-Mermin (or magic) square construction for the bipartite case~\cite{peres_incompatible_1990,mermin_simple_1990,cabello_all_2001,aravind_quantum_2004}. Both Bell inequalities can be cast as games that can be won with certainty by a quantum strategy. In contrast, any classical strategy fails with nonzero probability. Since the winning probability cannot exceed one, the quantum bound is equal to the non-signaling bound for these Bell inequalities. 

This effect, which we call full nonlocality in this work and that is also known as (or equivalent to) quantum pseudo-telepathy or all-versus-nothing proofs of nonlocality~\cite{Liu2024}, is arguably the strongest form of nonlocality. Beyond its foundational importance, it serves as a resource for shallow quantum circuits~\cite{Bravyi2018, Bravyi2020,bharti_power_2023}, robust randomness certification~\cite{coudron_trading_2021}, and device-independent quantum key distribution (DI-QKD)~\cite{jain_parallel_2020,zhen2023}, often outperforming standard nonlocal strategies. Given these significant advantages, determining which states exhibit full nonlocality is a pivotal objective, both for theoretical understanding and for practical implementations.

Although the notion of full nonlocality was already introduced decades ago~\cite{stairs_quantum_1983, heywood_nonlocality_1983,ghz1990,mermin_extreme_1990,mermin_simple_1990,peres_incompatible_1990,mermin_1993, Gisin2006}, determining which quantum states exhibit full nonlocality remains an open question. In the bipartite case, it is known that no pure non-maximally entangled two-qubit state exhibits full nonlocality~\cite{Elitzur1992,Scarani2008,Branciard2010,Portmann2012,Renner2023}. On the other hand, all maximally entangled states exhibit full nonlocality: via infinitely many measurements for local dimension $d=2$~\cite{Elitzur1992,barrett2006},
and finite Kochen--Specker sets for $d\geq 3$~\cite{RennerWolf2004,brassard_minimum_2005,aravind_quantum_2004,cabello_all_2001, cabello_simplest_2025, cabello_simplest_2025_2}.
Apart from maximally entangled states and qubit pairs, it is generally unknown whether other states exhibit fully nonlocal correlations in the bipartite setting~\cite{Mancinska2015}.

In this work, we first show that full nonlocality extends to a broad class of pure non-maximally entangled states,
see Theorem~\ref{maintheorem}. The core insight behind our results is a connection between full nonlocality and the antidistinguishability of quantum states, a notion introduced by Caves, Fuchs, and Schack~\cite{Caves2002} and later used by Pusey, Barrett, and Rudolph to challenge the epistemic view of quantum states~\cite{PBR2012}. Beyond its foundational significance, antidistinguishability has been used to demonstrate quantum advantages in communication~\cite{Perry2015, Heinosaari2019, Havlicek2020, Bae2026}, and, in recent years, general theorems have been established on the antidistinguishability of quantum states~\cite{Jain2014, Heinosaari2018, Russo2023, Johnston2025}.

Building on this connection between full nonlocality and antidistinguishability, we prove that non-maximally entangled states that are fully nonlocal exist in every $d\times d$ space, with $d\geq 3$. We further show that full nonlocality can be activated for every pure entangled state in the many-copy scenario: more precisely, for any given pure entangled state, there exists a number of copies such that the resulting system generates fully nonlocal correlations. Finally, we prove that, in every dimension, there exist pure states that do not exhibit full nonlocality at the single-copy level, even when allowing for general (not necessarily projective) measurements.

\section{Preliminaries}

\begin{figure}
    \centering
    \includegraphics[width=0.9\linewidth]{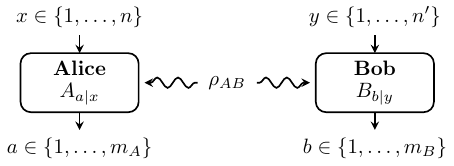}
    \caption{In a Bell scenario, two parties obtain the outputs $a, b$ based on the inputs $x,y$, generating the joint probability distributions $p(a,b|x,y)$. In a quantum realization, these probability distributions are derived from the measurements on a shared entangled state via the Born rule~\eqref{def:quantum_corr}.}
    \label{fig:bell_scenario}
\end{figure}

\subsection{Local content}\label{sec:local_content}
In the standard Bell scenario (see Fig.~\ref{fig:bell_scenario}), two spatially separated parties, Alice and Bob, choose inputs $x \in  [n] \coloneqq \{1,\dots,n\} $ and $y \in [n'] $ for their respective devices and obtain outputs $a \in [m_A]$ and $b \in [m_B]$, with probability given by the distribution $p(a,b|x,y)$.
We assume that the distribution is non-signaling, i.e., Alice's marginal distribution is independent of Bob's input
\begin{align}\label{eq:nonsignalling}
   \forall y,y', \,\, p_A(a|x)=\sum_{b}p(a,b|x,y)=\sum_{b}p(a,b|x,y')\,,
\end{align}
and an analogous condition holds for Bob. Any bipartite non-signaling distribution can be decomposed as 
\begin{equation}\label{eq:local_content_decomp}
    p(a,b|x,y)=q\cdot p_L(a,b|x,y)+(1-q)\cdot p_{NL}(a,b|x,y)\,,
\end{equation}
where $q\in[0,1]$, $p_L(a,b|x,y)$ is a local distribution that can be written as
\begin{align}\label{eq:localpart}
    p_L(a,b|x,y)=\sum_\lambda p(\lambda)\cdot  p_A(a|x,\lambda)\cdot p_B(b|y,\lambda) \, ,
\end{align}
and $p_{NL}(a,b|x,y)$ is an arbitrary probability distribution.\footnote{$p_{NL}$ is also non-signaling since it is the difference of two non-signaling distributions $p$ and $p_L$.} The local content ($\LC$) of a distribution $p$ is defined as the maximum value of $q$ in any such decomposition. If $\LC = 1$, then clearly $p$ is a local distribution. If $\LC <1$, then $p$ is said to be nonlocal. 
Of special interest to us are the distributions $p$ for which $\LC = 0$. We call such correlations that have no local content fully nonlocal, and we give a simple condition below to identify them.

Recall that the local distribution $p_L$ in Eq.~\eqref{eq:local_content_decomp} can be further decomposed into local deterministic strategies, i.e.,
\begin{equation}\label{eq:decomp_local}
    p_L(a,b|x,y)= \sum_{\boldsymbol{\alpha},\boldsymbol{\beta}} p(\boldsymbol{\alpha},\boldsymbol{\beta}) \, D_{\boldsymbol{\alpha},\boldsymbol{\beta}}(a,b|x,y)\,,
\end{equation}
where the tuple $(\boldsymbol{\alpha} = [\alpha_1,\dots,\alpha_n],\boldsymbol{\beta}=[\beta_1,\dots,\beta_{n'}])$ corresponds to the deterministic local strategy in which Alice (Bob) outputs $a=\alpha_x$ ($b={\beta_y}$) for the input $x\in[n]$ ($y\in[n']$) so that $D_{\boldsymbol{\alpha},\boldsymbol{\beta}}(a,b|x,y) \coloneqq \delta_{{\alpha}_x,a}\, \delta_{{\beta}_y,b}$. 

A simple observation that is crucial in what follows is that, if $p(a,b|x,y)=0$ for some $(a,b,x,y)$, then a local deterministic distribution $D_{\boldsymbol{\alpha},\boldsymbol{\beta}}$ that appears in the decomposition of $p$ as in Eq.~\eqref{eq:local_content_decomp} (with weight $q \cdot p(\boldsymbol{\alpha},\boldsymbol{\beta}) >0$ in Eq.~\eqref{eq:decomp_local}) must have $D_{\boldsymbol{\alpha},\boldsymbol{\beta}}(a,b|x,y)=0$, i.e., 
\begin{equation}\label{eq:zeroQ_gives_zeroL}
D_{\boldsymbol{\alpha},\boldsymbol{\beta}}(a,b|x,y) = 0\quad \forall (a,b,x,y) \text{ s.t. } p(a,b|x,y)=0.
\end{equation}
Therefore, if a distribution $D_{\boldsymbol{\alpha},\boldsymbol{\beta}}$ violates the above condition, i.e., if for some $(a,b,x,y)$ we have $p(a,b|x,y) = 0$ and $D_{\boldsymbol{\alpha},\boldsymbol{\beta}}(a,b|x,y) = 1$, then it cannot appear in the decomposition of $p$. If \textit{every} local deterministic distribution violates the above condition, then $p$ is fully nonlocal.
See Table~\ref{tab:simpleexample} for an illustration of this method for the Popescu-Rohrlich box, which is the simplest fully nonlocal (but not quantum) distribution. Further, we show in  App.~\ref{appsec:fnl_vs_antidistinguishability} that the converse also holds, i.e., if a distribution $p$ is fully nonlocal, then every local deterministic distribution $D_{\boldsymbol{\alpha},\boldsymbol{\beta}}$ violates condition~\eqref{eq:zeroQ_gives_zeroL} for some $(a,b,x,y)$ (see Eqs.~\eqref{appeq:nonzero_local_content},~\eqref{appeq:nonzero_local_content2}).
\begin{table}
\newcommand{\tallrow}{\rule[-1.1ex]{0pt}{3.6ex}}
    \centering
   \begin{tabular}{cccc}
        $p_{\text{PR}}=$ & \begin{tabular}{c | c || cc | cc } 
    & $y$ & \multicolumn{2}{c|}{1} & \multicolumn{2}{c}{2} \\ \cline{1-6}
    $x$ & {\diagbox[width=1.4em, height=1.6em]{$a$}{$b$}} & $1$ & $2$ &  $1$ & $2$   \\  \hline \hline
    
    \multirow{2}{*}{1} & 1 & \tallrow {$\frac{1}{2}$} & 0 & $\frac{1}{2}$ & {\underline{\color{magenta}0}} \\  
                       & 2 & \tallrow 0 & $\frac{1}{2}$ & 0 & $\frac{1}{2}$ \\ \cline{1-6} 
    
    \multirow{2}{*}{2} & 1 & \tallrow {$\frac{1}{2}$} & 0 & 0 & {$\frac{1}{2}$} \\ 
                       & 2 & \tallrow 0 & $\frac{1}{2}$ & $\frac{1}{2}$ & 0 \\ 
\end{tabular} & $ \ D_{\boldsymbol{\alpha},\boldsymbol{\beta}}$ = &  
\begin{tabular}{c|  c || cc | cc } 
    & $y$ & \multicolumn{2}{c|}{1} & \multicolumn{2}{c}{2} \\ \cline{1-6}
    $x$ & {\diagbox[width=1.4em, height=1.6em]{$a$}{$b$}} & $1$ & $2$ &  $1$ & $2$   \\  \hline \hline
    
    \multirow{2}{*}{1} & 1 & \tallrow 1 & 0 & 0 & {\underline{\color{magenta}1}} \\  
                       & 2 & \tallrow 0 & 0 & 0 & 0 \\ \cline{1-6} 
    
    \multirow{2}{*}{2} & 1 & \tallrow 1 & 0 & 0 & 1 \\ 
                       & 2 & \tallrow 0 & 0 & 0 & 0 \\ 
\end{tabular}\\
    \end{tabular} \hfill    
    \caption{Left: Probability table for the Popescu-Rohrlich Box $p_{\text{PR}}(a,b|x,y)$. Right: A local deterministic strategy, where Alice outputs $a=1$ for both inputs and Bob outputs $b=y$. The distribution $p_{\text{PR}}$ is fully nonlocal since any of the 16 local deterministic strategies cannot appear in the decomposition of Eq.~\eqref{eq:local_content_decomp}. For the local strategy shown here, the contradiction arises since $p_{\text{PR}}(1,2|1,2)=0$ but $D_{\boldsymbol{\alpha},\boldsymbol{\beta}}(1,2|1,2)=1$.}
    \label{tab:simpleexample}
\end{table}

In this work, we study the local content of quantum correlations $p_Q$ that can be written as
\begin{equation}\label{def:quantum_corr}
    p_Q(a,b|x,y)=\Tr[(A_{a|x}\otimes B_{b|y})\rho_{AB}]\,,
\end{equation}
for some bipartite quantum state $\rho_{AB}$ and local quantum measurements $\{A_x\}_x$ and $\{B_y\}_y$, where $\rho_{AB} \succeq 0$, ${\Tr}[\rho_{AB}]=1$, and $A_x = \{A_{a|x}\}_a$, $B_y = \{B_{b|y}\}_b$ define positive operator-valued measures (POVMs) whose elements are positive $A_{a|x}\succeq 0$ and sum to the identity $\sum_a A_{a|x}=\id$, and similarly for $B_{b|y}$. A quantum state $\rho_{AB}$ is said to be fully nonlocal if there exist local quantum measurements such that the resulting distribution $p_Q$ in Eq.~\eqref{def:quantum_corr} is fully nonlocal. In what follows, we focus on pure states and provide conditions to detect its full nonlocality. While in this work we mostly consider scenarios with finitely many inputs and outputs, we remark that some quantum states, for instance the maximally entangled two-qubit state, can show full nonlocality only if an infinite set of measurements is considered~\cite{barrett2006}.

\subsection{Geometric interpretation of full nonlocality}\label{sec:geometric_int_fnl}

We can think of the probability distribution $p$ as a vector in a real vector space. Distributions satisfying Eqs.~\eqref{eq:nonsignalling},~\eqref{eq:localpart}, and~\eqref{def:quantum_corr} define, respectively, the non‑signaling (NS), local (L), and quantum (Q) sets in this space. A Bell functional $I(p)$ is a linear functional of the probabilities $p$ defined by a set of real coefficients $\{I_{abxy}\}_{a,b,x,y}$, 
\begin{align}\label{eq:bell_functional}
    I(p) \coloneqq \sum_{a,b,x,y} I_{abxy}\ p(a,b|x,y)\,.
\end{align}
We denote by $w_L$ and $w_{NS}$ the largest values of $I(p)$ achievable by local and non-signaling distributions, respectively, and by $w_Q$ the supremum over quantum distributions. 

As proven in Ref.~\cite{Liu2024}, fully nonlocal quantum correlations lie on the boundary of the non-signaling set, and furthermore, they define a perfect quantum strategy for a suitably chosen Bell game (also known as pseudo-telepathy). In fact, it is possible to construct a Bell functional which attains its largest non-signaling value $w_{NS}$ in in the quantum set $Q$, while all local strategies perform worse ($w_L \lneq w_Q=w_{NS}$). While we refer the reader to Ref.~\cite{Liu2024} for further details, we provide a simple example of such a Bell functional. Its coefficients can be defined as:\footnote{It is worth mentioning that applying this procedure to the PR-correlations in Table~\ref{tab:simpleexample} leads to the CHSH inequality (up to rescaling of the coefficients).}
\begin{equation}
    I_{abxy} \coloneqq \begin{cases}
        0 \quad &\text{if} \quad p_Q(a,b|x,y)>0\,,\\
        -1 \quad &\text{if} \quad p_Q(a,b|x,y)=0\,.
    \end{cases}\,
\end{equation}
The Bell functional~\eqref{eq:bell_functional} constructed from these coefficients clearly has the maximal quantum value equal to $0$, and no non-signaling distribution can achieve a larger value, i.e., $w_Q=w_{NS}=0$. At the same time, if $p_Q(a,b|x,y)$ is fully nonlocal, then every local deterministic distribution violates condition~\eqref{eq:zeroQ_gives_zeroL}, and hence cannot avoid all the $-1$s in the Bell functional. Therefore, $w_{L} \leq -1$, which is strictly less than the quantum (or non-signaling) bound. This motivates the picture in Fig.~\ref{fig:sets}.
\begin{figure}
    \centering
    a)\includegraphics[width=0.46\linewidth]{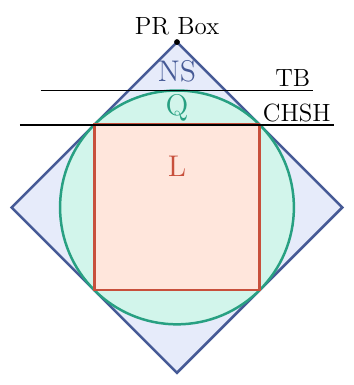}b) \includegraphics[width=0.46\linewidth]{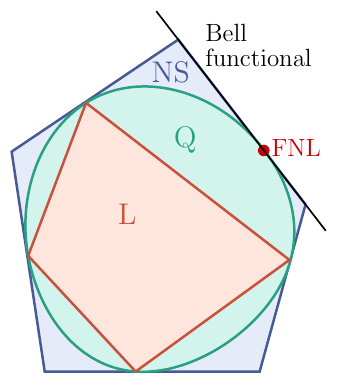}
    \caption{Graphical representations of the local (L), quantum (Q), and non-signaling (NS) set. a) A section of the set of quantum correlations in the two-input two-output scenario. The local polytope (L) is bounded by the CHSH inequality~\cite{goh_geometry_2018}. The Tsirelson's bound (TB) identifies the maximal violation of the CHSH inequality allowed by quantum correlations. The Popescu-Rohrlich (PR) box provides the maximal violation of the CHSH inequality via non-signaling correlations. b) A schematic representation of a fully nonlocal (FNL) quantum distribution. It is a point on the boundary of the non-signaling set that does not contain a vertex of the local polytope.}
    \label{fig:sets}
\end{figure}

On the other hand, if the local content of $p_Q$ is nonzero, then a Bell functional with $w_L < w_Q = w_{NS}$ cannot exist. Indeed, since the Bell functional is linear, a decomposition of the form in Eq.~\eqref{eq:local_content_decomp} implies that $w_Q \leq q\  w_L +(1-q)\ w_{NS}$, which is strictly smaller than $w_{NS}$ unless $w_L=w_{NS}$. Intuitively speaking, one can simulate quantum correlations that have a local content with local resources in a fraction $q$ of the rounds and in these rounds the performance is bounded by the maximal local value.

\subsection{Antidistinguishability of quantum states}
One of the main results of this work is to establish a connection between full nonlocality of quantum correlations and the concept of antidistinguishability. A set of states $\{\rho_i\}_{i \in [n]}$ is called antidistinguishable if there exists a measurement $\{M_i\}_{i \in [n]}$ such that  $\mathrm{Tr}[M_i\rho_i]=0$ for all $i \in [n]$ \cite{Caves2002, Leifer2014, Heinosaari2018}. Hence, 
the outcome $i$ implies that the prepared state was not $\rho_i$. This is also called state exclusion and it can be solved via the following semidefinite program (SDP)~\cite{Jain2014, Johnston2025}:
\begin{equation}\label{eq:SDP}
\begin{aligned}
    \min_{\{M_{i}\}_i}&\ \sum_i \mathrm{Tr}[M_i  \rho_{i}] \\ \text{subject to}&:\ \sum_i M_i=\id\,,\, M_i \succeq 0  .
\end{aligned}
\end{equation}
Since $\mathrm{Tr}[M_i \rho_i]\geq 0$, the set $\{\rho_i\}_{i\in[n]}$ is antidistinguishable if and only if the solution to the SDP above is $0$.
We say that a measurement $\{M_j\}_{j\in[k]}$ is an antidistinguishing measurement for $\{\rho_j\}_{j\in[n]}$ if and only if it satisfies\footnote{Note that there are slightly different notions of antidistinguishability in the literature. The one that we have considered is sometimes also called weak antidistinguishability~\cite{Skrikumar2024}.}
\begin{equation}\label{eq:antidistinguishing_measurement}
        \forall M \in \{M_j\}_{j\in[k]} \,, \exists \rho \in \{\rho_j\}_{j\in[n]} \, \text{s.t.} \, \mathrm{Tr}[M\rho]=0\,.
\end{equation}
In App.~\ref{app:anti_meas}, we show that the existence of such a measurement is equivalent to the antidistinguishability of $\{\rho_j\}_{j\in[n]}$. 
A simple example of states that are antidistinguishable is given by
\begin{equation}\label{eq:antidistinguishability_example}
    \begin{aligned}
    \ket{ \psi_0}&=\alpha_1 \ket{1}+\alpha_2 \ket{2}\,,\\
    \ket{ \psi_1}&=\beta_1 \ket{0}+\beta_2 \ket{2}\,,\\
    \ket{ \psi_2}&=\gamma_1 \ket{0}+\gamma_2 \ket{1} \,,
\end{aligned}
\end{equation}
where $\alpha_i, \beta_i, \gamma_i\in \mathbb{C}$ are arbitrary complex numbers (up to normalization). Clearly, a measurement in the computational basis $M_0=\ketbra{0}$, $M_1=\ketbra{1}$, $M_2=\ketbra{2}$ antidistinguishes these three states. Further, it was shown by Caves, Fuchs, and Schack~\cite{Caves2002} that any three pure states $\{\ket{ \psi_0}, \ket{ \psi_1}, \ket{ \psi_2}\}$ that satisfy $|\braket*{ \psi_i}{ \psi_j}|\leq 1/2$ $(i\neq j)$ can be written as in Eq.~\eqref{eq:antidistinguishability_example} in some basis, and hence are antidistinguishable. Recently, this criterion was generalized by Johnston, Russo, and Sikora in~\cite{Johnston2025}, where the following theorem was established:
\begin{theorem}[{\cite[Cor.~5.3]{Johnston2025}}]\label{thmJohnston}
    A set of $n$ pure quantum states $\{\ket{ \psi_1}, \ket{ \psi_2}, \dots, \ket{ \psi_n}\}$ is antidistinguishable if
    \begin{align}\label{eq:johnston_condition}
        \norm{G}_F\coloneqq\sqrt{\sum_{i=1}^n \sum_{j=1}^n \lvert\braket{ \psi_i}{ \psi_j}\rvert^2}\leq \frac{n}{\sqrt{2}} \, .
    \end{align}
    Here, $G_{ij}=\braket*{ \psi_i}{ \psi_j}$ is the Gram matrix associated with the states $\{\ket{ \psi_i}\}_{i\in[n]}$, and $\norm{G}_F$ is its Frobenius norm. In particular, the set is antidistinguishable if $\lvert\braket{ \psi_i}{ \psi_j}\rvert\leq \sqrt{({n-2})/({2n-2})}$ for all pairs $i\neq j$.
\end{theorem}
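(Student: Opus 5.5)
The plan is to handle the main Frobenius-norm criterion by semidefinite-programming duality for the SDP~\eqref{eq:SDP} and then read off the pairwise corollary. Write $\rho_i=\ketbra{\psi_i}$. The dual of~\eqref{eq:SDP} is
\begin{equation*}
\max_{Y=Y^\dagger}\ \Tr[Y]\quad\text{subject to}\quad Y\preceq \rho_i\ \ \forall i .
\end{equation*}
Strong duality holds here, since $M_i=\id/n$ is strictly feasible for the primal. So the states are antidistinguishable if and only if the dual optimum is $0$, i.e.\ if and only if there is no Hermitian $Y$ with $\Tr Y>0$ and $Y\preceq\rho_i$ for all $i$.

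It is convenient to move everything into Gram-matrix coordinates. Let $V=\sum_i \ket{\psi_i}\bra{e_i}$, so that $G=V^\dagger V$. Then $\rho_i=V E_{ii}V^\dagger$, and a measurement supported on the span of the states is $M_i=V^{+\dagger}N_iV^{+}$-like. The resulting reformulation is standard: the states are antidistinguishable if and only if there exist PSD $n\times n$ matrices $N_1,\dots,N_n$ with $\sum_i N_i=G$ and $(N_i)_{ii}=0$. The matrices $N_i$ are the Gram-level images of the measurement operators; I would verify that this reformulation is exact, using that $\sum_i M_i=\id$ restricted to the support equals the projector onto it.

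The sufficiency direction is where the Frobenius bound must enter. One option is to construct the $N_i$ explicitly; the other is to rule out dual witnesses. On the dual side, a witness would be a matrix $Z\succeq 0$ with $Z_{ii}$ controlling $\Tr$ and $\langle Z,G\rangle$ small. Concretely, the alternative (Farkas) says antidistinguishability fails if and only if there exist $Z\succeq0$ and real $c_i$ such that $Z\preceq \sum_j c_j E_{jj}$ holds on each face and $\Tr(ZG)<\dots$.

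I would rather construct the $N_i$ directly. Decompose $G=J'+\Delta$, where $\Delta=G-\id$ has zero diagonal, and try
\begin{equation*}
N_i=\tfrac1n G + \text{correction}_i ,
\end{equation*}
where the correction kills the $(i,i)$ entry while staying PSD. Since $\operatorname{rank}$ considerations are messy, the cleaner route is the one in Johnston--Russo--Sikora. There, a sufficient condition comes from a feasible $N_i$ built from $G$ and $G^2$, and positivity reduces to an eigenvalue inequality on $G$. The inequality $\|G\|_F^2\le n^2/2$, together with $\Tr G=n$, bounds the largest eigenvalue: from $\lambda_{\max}\le\|G\|_F$ and the variance bound, $\sum\lambda_k^2\le n^2/2$ gives $\lambda_{\max}\le n/2+\sqrt{\cdot}$. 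What is needed is exactly a condition of the form $\lambda_{\max}(G)\le n/\dots$. I expect this step, choosing the ansatz for $N_i$ so that the Frobenius condition suffices, to be the main obstacle. I would first try $N_i=(\id-\tfrac{1}{?}e_ie_i^\top)$-conjugations of $G$, and check PSD via Schur complements.

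The pairwise corollary is then immediate. If $|\braket{\psi_i}{\psi_j}|^2\le (n-2)/(2n-2)$ for $i\ne j$, then
\begin{equation*}
\|G\|_F^2\le n+n(n-1)\frac{n-2}{2n-2}=n+\frac{n(n-2)}{2}=\frac{n^2}{2},
\end{equation*}
so the main condition~\eqref{eq:johnston_condition} applies.
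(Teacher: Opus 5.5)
The paper gives no proof of this statement; it imports it from Johnston, Russo and Sikora (their Cor.~5.3). The only step at the level of this paper is reducing the pairwise condition to the Frobenius one, and you do that correctly: $\norm{G}_F^2\le n+n(n-1)\frac{n-2}{2n-2}=\frac{n^2}{2}$. Your SDP dual and Gram-matrix reformulation are also right. In fact $N_i\succeq 0$ with $(N_i)_{ii}=0$ forces the whole $i$-th row and column of $N_i$ to vanish. So antidistinguishability says $G$ is a sum of PSD matrices, the $i$-th supported on $[n]\setminus\{i\}$. The correct alternative, replacing your garbled Farkas statement, is: the set fails to be antidistinguishable iff some Hermitian $Z$ whose $(n-1)\times(n-1)$ principal submatrices are all PSD has $\Tr(ZG)<0$.

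As a self-contained proof of the Frobenius criterion, however, the proposal has a genuine gap. The decisive step is never carried out: you neither construct the $N_i$ nor show $\Tr(ZG)\ge0$ for all such $Z$ when $\norm{G}_F^2\le n^2/2$. The expressions ``$\Tr(ZG)<\dots$'', ``$n/\dots$'', ``$\tfrac{1}{?}$'' remain placeholders.

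Moreover, the route you commit to, using the hypothesis only through the bound it gives on $\lambda_{\max}(G)$, cannot work. For $n=3$, $\Tr G=3$ and $\norm{G}_F^2\le 9/2$ imply exactly $\lambda_{\max}(G)\le 2$. Equality holds for the antidistinguishable set with all overlaps $1/2$: $G=\tfrac12\id+\tfrac12 J$, with $J$ the all-ones matrix and spectrum $(2,\tfrac12,\tfrac12)$.

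Now take three qubit states whose Bloch vectors $\vec r_1,\vec r_2,\vec r_3$ are vertices of a regular tetrahedron. They have Gram spectrum $(2,1,0)$, yet they are not antidistinguishable. In $\mathbb{C}^2$ each $M_i$ must equal $c_i\ket{\psi_i^\perp}\bra{\psi_i^\perp}=\tfrac{c_i}{2}(\id-\vec r_i\cdot\vec\sigma)$. Then $\sum_iM_i=\id$ would need $\sum_ic_i=2$ and $\sum_ic_i\vec r_i=0$, which is impossible since the $\vec r_i$ are linearly independent. This set has $\norm{G}_F^2=5>9/2$, so it does not contradict the theorem.

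Hence no criterion of the form $\lambda_{\max}(G)\le c$ can cover every Gram matrix the hypothesis allows. The argument must use the full Frobenius information together with $G\succeq0$, which is exactly what the cited result supplies. Either provide such an argument or, as the paper does, invoke Johnston--Russo--Sikora for the main criterion.
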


\section{Connecting full nonlocality and antidistinguishability}\label{sec:fnl_vs_antidistinguishability}
Consider a Bell test defined by local measurements on a bipartite state $\rho_{AB}$, see Eq.~\eqref{def:quantum_corr}. Denote by $\rho^B_{a|x}$ the post-measurement quantum state on Bob's side when Alice performs the measurement $x\in [n]$ and gets the outcome $a\in\{1,\dots,m_A\}=[m_A]$, i.e.,
\begin{align}
    \rho^B_{a|x}=\frac{\mathrm{Tr_A}[(A_{a|x}\otimes \id)\rho_{AB}]}{\mathrm{Tr}[(A_{a|x}\otimes \id)\rho_{AB}]} \, .
\end{align}
Let us denote by $\boldsymbol{\alpha} = [\alpha_1,\dots,\alpha_n]$, where $\alpha_x \in [m_A]$, some collection of outcomes for every input $x\in[n]$. 
Suppose now that for some fixed $\boldsymbol{\alpha}$ the post-measurement states $\rho^B_{\boldsymbol{\alpha}} \coloneqq \{\rho^B_{a=\alpha_1|x=1}, \dots ,\rho^B_{a=\alpha_n|x=n}\}$ are antidistinguishable. Further suppose that Bob, by choosing a measurement labeled as $y=\boldsymbol{\alpha}$, performs the antidistinguishing measurement $\{B_{b| y=\boldsymbol{\alpha} }\}_{b\in [n]}$. Then, it follows from the definition of antidistinguishability that $\mathrm{Tr}[\rho^B_{a=\alpha_x|x} B_{b=x|y=\boldsymbol{\alpha}}] = 0$ for all values of $x\in[n]$. Therefore, conditioned on Alice giving the outcome $a=\alpha_x$ for $x$, the probability that Bob gives the outcome $b=x$ for $y=\boldsymbol{\alpha}$ is $0$, i.e.,
\begin{equation}
    p_Q(b=x|x,y=\boldsymbol{\alpha},a=\alpha_x) = \mathrm{Tr}[\rho^B_{a=\alpha_x|x} B_{b=x|y=\boldsymbol{\alpha}}]=0\,,
\end{equation}
which further implies, using Bayes' rule $p_Q(a,b|x,y)=p_Q(b|x,y,a)p_Q(a|x,y)$, that
\begin{equation}
    \forall x \in [n]\,, \quad p_Q(a=\alpha_x,b=x|x,y=\boldsymbol{\alpha})=0\,.
\end{equation}
It is now easy to see that any deterministic local strategy $D_{\boldsymbol{\alpha},\boldsymbol{\beta}}$ in which Alice's device outputs the outcomes according to the strategy given by $\boldsymbol{\alpha}$ cannot appear in the decomposition of $p_Q$ as in Eq.~\eqref{eq:local_content_decomp}. Indeed, for any such local deterministic strategy, Bob must assign some outcome for the measurement choice $y=\boldsymbol{\alpha}$, and hence $D_{\boldsymbol{\alpha},\boldsymbol{\beta}}(a=\alpha_x,b=x|x,y=\boldsymbol{\alpha})=1$ for some $x\in[n]$, which violates condition~\eqref{eq:zeroQ_gives_zeroL}.

This directly establishes the following link: consider a quantum distribution $p_Q$~\eqref{def:quantum_corr} such that (i) for all choices of local strategies $\boldsymbol{\alpha}$, the set of post-measurement states $\rho^B_{\boldsymbol{\alpha}}$ are antidistinguishable and (ii) Bob performs the corresponding antidistinguishing measurement for each $\boldsymbol{\alpha}$.  Then, no local deterministic strategy can appear in the decomposition of $p_Q$, and the resulting distribution is fully nonlocal. 

A converse statement also holds: If for a given state and some measurements for Alice there is at least one $\boldsymbol{\alpha}$ such that the post-measurement states $\rho^B_{\boldsymbol{\alpha}}$ are \textit{not} antidistinguishable, then the resulting correlations always have a local content, irrespective of the measurements performed by Bob (see App.~\ref{appsec:fnl_vs_antidistinguishability} for proof). This leads us to the following theorem:

\begin{theorem}\label{thm:antidistinguishable}
    A bipartite state is fully nonlocal if and only if there exist measurements for Alice such that all the sets $\rho^B_{\boldsymbol{\alpha}} \coloneqq \{\rho^B_{a=\alpha_1|x=1},\dots,\rho^B_{a=\alpha_n|x=n}\}$ of post-measurement states are antidistinguishable.
\end{theorem}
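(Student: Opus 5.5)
The proof has two directions. The "if" direction is essentially the construction already described above, and I would only make it rigorous.

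\emph{Sufficiency.} Fix Alice's measurements $\{A_x\}_{x\in[n]}$ such that every set $\rho^B_{\boldsymbol{\alpha}}$ is antidistinguishable. Outcomes with $p(a|x)=0$ leave $\rho^B_{a|x}$ undefined. For these I would adopt the convention that such an outcome may be discarded, or equivalently merged into another outcome. For each of the finitely many $\boldsymbol{\alpha}\in[m_A]^n$, give Bob an input $y=\boldsymbol{\alpha}$ and let him perform an antidistinguishing POVM $\{B_{b|\boldsymbol{\alpha}}\}_{b\in[n]}$. This yields $p_Q(\alpha_x,x|x,\boldsymbol{\alpha})=0$ for all $x$. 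Any deterministic $D_{\boldsymbol{\alpha},\boldsymbol{\beta}}$ has $\beta_{\boldsymbol{\alpha}}=x_0$ for some $x_0$, so it puts weight on a zero of $p_Q$. By Eq.~\eqref{eq:zeroQ_gives_zeroL}, no deterministic strategy can appear in the decomposition, hence $\LC=0$.

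\emph{Necessity.} Suppose $p_Q$ is fully nonlocal for some measurements $\{A_x\}$ and $\{B_y\}$. Suppose, for contradiction, that some $\boldsymbol{\alpha}$ gives a non-antidistinguishable set $\rho^B_{\boldsymbol{\alpha}}$. The plan is to exhibit a strictly positive local component in which Alice plays $\boldsymbol{\alpha}$ deterministically. I would pick a deterministic strategy for Bob as follows.

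\begin{itemize}
\item First I would use the equivalence proven in App.~\ref{app:anti_meas}. Since $\rho^B_{\boldsymbol{\alpha}}$ is not antidistinguishable, no measurement satisfies Eq.~\eqref{eq:antidistinguishing_measurement}. So every POVM $\{B_{b|y}\}_b$ has at least one element, call it $B_{\beta_y|y}$, with $\mathrm{Tr}[B_{\beta_y|y}\rho^B_{\alpha_x|x}]>0$ for every $x\in[n]$.
\item Setting $\boldsymbol{\beta}=(\beta_y)_y$ then gives $p_Q(\alpha_x,\beta_y|x,y)=p(\alpha_x|x)\,\mathrm{Tr}[B_{\beta_y|y}\rho^B_{\alpha_x|x}]>0$ for all $x,y$, provided $p(\alpha_x|x)>0$.
\item Define $q=\min_{x,y}p_Q(\alpha_x,\beta_y|x,y)>0$. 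The minimum is over finitely many inputs, which keeps it strictly positive.
\item Then $p_Q-q\,D_{\boldsymbol{\alpha},\boldsymbol{\beta}}$ is entrywise nonnegative and nonsignaling, because both terms are. It is also normalized to $1-q$, so $p_Q=q D+(1-q)p_{NL}$, giving $\LC\ge q>0$.
\end{itemize}

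This contradicts full nonlocality, so necessity holds for these particular measurements, which is what the theorem asserts.

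The main obstacle is the necessity direction. It hinges on the nontrivial equivalence in App.~\ref{app:anti_meas}: if some measurement has every element annihilating one of the states, then a genuine $n$-outcome antidistinguishing measurement exists. That lets us work with arbitrary $k$-outcome POVMs on Bob's side, and I would take it as given; the only remaining check would be its proof, which merges elements by their annihilated index. A second subtlety is zero-probability outcomes of Alice. I would handle it by restricting $\boldsymbol{\alpha}$ to outcomes with $p(a|x)>0$, as in the definition of $\rho^B_{a|x}$. With infinitely many inputs the minimum defining $q$ could vanish, so I would state the argument for finite scenarios, matching the paper's focus.
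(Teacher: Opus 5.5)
Your proposal is correct and follows the paper's proof essentially step for step. Sufficiency is the construction of Sec.~\ref{sec:fnl_vs_antidistinguishability}, with one Bob input $y=\boldsymbol{\alpha}$ per strategy. Necessity is the argument of App.~\ref{appsec:fnl_vs_antidistinguishability}: Observation~\ref{appobs:antidistinguishability} gives, for each $y$, an outcome $\beta_y$ that annihilates none of the states, and then $q\,D_{\boldsymbol{\alpha},\boldsymbol{\beta}}$ is split off via Eqs.~\eqref{appeq:nonzero_local_content}--\eqref{appeq:nonzero_local_content2}. Your remarks on finitely many inputs and on zero-probability outcomes just make explicit what the paper leaves implicit; the latter are harmless anyway, since any $D_{\boldsymbol{\alpha},\boldsymbol{\beta}}$ using such an outcome already violates Eq.~\eqref{eq:zeroQ_gives_zeroL}.
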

Note that the number of measurement settings $n'$ required for Bob can in principle be quite large, equal to $(m_A)^n$. But this number can decrease (sometimes significantly) if the same measurement can antidistinguish several sets of post-measurement states $\rho^B_{\boldsymbol{\alpha}}$ as it happens, for instance, in the Peres-Mermin square. In this nonlocal game~\cite{aravind_quantum_2004}, Alice and Bob each perform $3$ measurements with $4$ outcomes each. There are thus $4^3=64$ different local deterministic strategies for Alice which correspond to an equal number of sets of post-measurement states which must be antidistinguishable. However, only $3$ measurements on Bob's side are sufficient to antidistinguish all of them. See App.~\ref{app:Peresmermin} for a detailed description. 

In what follows, we make use of the sufficient condition in Theorem~\ref{thmJohnston} for antidistinguishability of quantum states, together with the aforementioned theorem, to find general classes of states that are fully nonlocal. 

\section{Activation of pure entangled qubit pairs}\label{sec:activating_qubits}
As a simple first example, we show that all pure entangled qubit pairs are fully nonlocal in the multi-copy scenario using a finite number of measurements. In contrast, it is known that with a single copy only the maximally entangled two-qubit state is fully nonlocal and further the corresponding distribution requires an infinite number of measurement settings \cite{Elitzur1992,barrett2006, Scarani2008, Branciard2010}.

Consider first a general two-qubit state (we assume $p\geq 1/2$ without loss of generality)
\begin{equation}
    \ket{ \psi}_{AB}=\sqrt{p}\ket{00}+\sqrt{1-p}\ket{11} \, ,
\end{equation}
and suppose Alice has three measurement settings corresponding to the qubit observables $X$, $Y$, and $Z$. The post-measurement states on Bob's side conditioned on Alice obtaining the outcome $a\in\{0,1\}$ for the measurement choice $x \in \{X,Y,Z\}$ are given by $\ket*{ \psi^B_{a|Z}}=\ket{a}$, $\ket*{ \psi^B_{a|X}}=\sqrt{p}\ket{0}+(-1)^a\sqrt{1-p}\ket{1}$, and $\ket*{ \psi^B_{a|Y}}=\sqrt{p}\ket{0}+(-1)^a\,\mathrm i \sqrt{1-p}\ket{1}$. Calculating the inner products of these post-measurement states and using $\sqrt{p^2+(1-p)^2}\leq \sqrt{p}$ for $p\geq 1/2$, we obtain, for $x,x'\in\{X,Y,Z\}$ and $a,a'\in\{0,1\}$,
\begin{equation}
    |\braket*{ \psi^B_{a|x}}{ \psi^B_{a'|x'}}|\leq \sqrt{p}\quad (x \neq x')\,.
\end{equation}
Suppose now that Alice and Bob share $k$ copies of this two-qubit state $\ket{ \psi}_{AB}$ and once again Alice has three measurement settings corresponding to performing one of $X$, $Y$, or $Z$ basis measurement on all the individual qubits. That is, Alice measures along the same basis on all of her $k$ qubits, and obtains the $k$-tuple outcome $\vec{a} = (a_1,\dots,a_k) \in \{0,1\}^k$. The post-measurement states on Bob's side are then the tensor products of single copy post-measurement states, i.e.,
\begin{equation}
    \begin{aligned}
    &\ket*{ \psi^{B}_{\vec{a}|Z}} = \ket{\vec{a}} = \ket{a_1\dots a_k}\,,\\
    &\ket*{ \psi^{B}_{\vec{a}|X}} =\bigotimes_{i=1}^k \left( \sqrt{p}\ket{0}+(-1)^{a_i} \sqrt{1-p}\ket{1}\right)\, ,\\
    &\ket*{ \psi^{B}_{\vec{a}|Y}} = \bigotimes_{i=1}^k \left(\sqrt{p}\ket{0}+\mathrm i(-1)^{a_i} \sqrt{1-p}\ket{1}\right)\, ,
    \end{aligned}
\end{equation}
and satisfy, for all $x,x'\in\{X,Y,Z\}$, 
\begin{equation}
    |\braket*{ \psi^B_{\vec{a}|x}}{ \psi^B_{\vec{a}'|x'}}|\leq \sqrt{p}^k\quad (x \neq x')\,,
\end{equation}
due to the tensor product structure.
If $\sqrt{p}^k\leq 1/2$, then all the sets of post-measurement states $\{\ket*{ \psi^{B}_{\vec{a}|X}}$, $\ket*{ \psi^{B}_{\vec{a}'|Y}}$, $\ket*{ \psi^{B}_{\vec{a}''|Z}}\}$ are antidistinguishable (Theorem~\ref{thmJohnston}) and hence the state $(\sqrt{p}\ket{00}+\sqrt{1-p}\ket{11})^{\otimes k}$ demonstrates full nonlocality (Theorem~\ref{thm:antidistinguishable}).

For the maximally entangled state, $p=1/2$, two copies, $k=2$, are enough to demonstrate full nonlocality with a finite number of measurement settings, but the full nonlocality of this state was already known as it is equivalent to a maximally entangled state in dimension $4\times 4$. Beyond maximal entanglement, our results show that any two-qubit entangled state can be activated if sufficiently many copies are available, since there always exists a value of $k$ such that $\sqrt{p}^k \leq 1/2$ for $p<1$. It is worth noting that we have used only three separable measurements for Alice. Later on, we use similar ideas to extend this result to arbitrary $d$-dimensional pure entangled states.

\section{Full nonlocality from Mutually Unbiased Bases}\label{sec:sufficient_condition_fnl_mub}
In this section, we derive some simple sufficient criteria for full nonlocality. Consider a general pure state written in its Schmidt form
\begin{equation}\label{eq:Schmidt_form}
    \ket{ \psi}_{AB}=\sum_{i=0}^{d-1} \sqrt{\lambda_i} \ket{ii} \, .
\end{equation}
Suppose Alice measures this state along one of $n$ mutually unbiased bases (MUBs), i.e., the POVM elements (projectors) of Alice's $n$ measurements $\{A_x\}_{x\in[n]}$ with $d$ outcomes each are of the form $A_{a|x}=\ketbra*{A_{a|x}}$ and satisfy, for $x,x'\in[n]$ and $a,a' \in \{0,\dots,d-1\}$,
\begin{equation}\label{eq:defMUB}
    |\braket*{A_{a|x}}{A_{a'|x'}}|=\frac{1}{\sqrt{d}} \quad (x\neq x')\,.
\end{equation}
It is known that at least three MUBs exist in every dimension $d$, while the exact number for a given dimension is not known~\cite{mcnulty_mutually_2024}. If Alice and Bob share a maximally entangled state ($\lambda_i=1/d$ for all $i$), the post-measurement state on Bob's side conditioned on Alice obtaining the outcome $a$ for the measurement $x$ is given by $\ket*{\psi^B_{a|x}} = \ket*{A_{a|x}}^*$, and thus 
\begin{align}
    |\braket*{ \psi^B_{a|x}}{ \psi^B_{a'|x'}}|=\frac{1}{\sqrt{d}} \quad (x\neq x')\,.
\end{align}
Using Theorems~\ref{thmJohnston} and~\ref{thm:antidistinguishable}, this directly proves that all maximally entangled states in dimension $d\geq 4$ exhibit full nonlocality, which also follows from Ref.~\cite{RennerWolf2004} by using the connection to Kochen-Specker sets. If we instead have a non-maximally entangled state that is close to being maximally entangled (i.e., all the Schmidt coefficients $\lambda_i$ are close to $1/d$), then we intuitively expect that the overlaps $|\braket*{ \psi^B_{a|x}}{ \psi^B_{a'|x'}}|$ above are close to $1/\sqrt{d}$. As long as these overlaps are below a given bound (e.g., $1/2$ for three measurements), then we have full nonlocality due to Theorems~\ref{thmJohnston} and~\ref{thm:antidistinguishable}.

More precisely, let us fix Alice's first measurement to be along the Schmidt basis, i.e., $\ket*{A_{a|x=1}}=\ket*{a}$. Due to Eq.~\eqref{eq:defMUB}, the other measurements are then of the form
\begin{equation}
    \ket{A_{a|x}}=\sum_{j=0}^{d-1} \braket*{j}{A_{a|x}} \ket*{j}= \frac{1}{\sqrt{d}}\,\sum_{j=0}^{d-1} \mathrm e^{\mathrm i\phi^j_{a|x}} \ket{j} \quad (x \neq 1),
\end{equation}
where we have used $\braket*{j}{A_{a|x}}=\braket*{A_{j|x=1}}{A_{a|x}} = e^{\mathrm i\phi^j_{a|x}}/\sqrt{d}$ for some phase $e^{\mathrm i\phi^j_{a|x}}$ whose precise value is not relevant for the discussion. If Alice measures along these MUBs, the post-measurement states on Bob's side conditioned on Alice obtaining the outcome $a$ for the measurement choice $x$ are given by
\begin{equation}\label{eq:postmeasurement_states_MUB}
    \ket*{ \psi^B_{a|x}}=
    \begin{cases}
        \ket*{a} \quad &\text{if}\quad x=1\,,\\
        \sum_{j=0}^{d-1} \sqrt{\lambda_j}\ \mathrm e^{-\mathrm i\phi^j_{a|x}} \ket{j}  \quad &\text{if} \quad x \neq 1\, .
    \end{cases}
\end{equation} 
As shown in App.~\ref{app:overlap_bounds}, we can bound the overlaps of these post-measurement states as follows
\begin{align}
    &\forall  x'>x\,,\nonumber\\
        &|\braket*{ \psi^B_{a|x}}{ \psi^B_{a'|x'}}|\leq
        \begin{cases}
        \sqrt{\lambda_{max}}  &\text{if}\quad x=1\,, \\ 
               1-(d-\sqrt{d})\lambda_{min}  &\text{if} \quad x \neq 1\,,
    \end{cases}\label{eq:upperbound1}
\end{align}
with $\lambda_{max}$ and $\lambda_{min}$ the largest and smallest Schmidt coefficients of $\ket{\psi}_{AB}$. This allows us to apply Theorems~\ref{thmJohnston} and~\ref{thm:antidistinguishable} to arrive at the following theorem.
\begin{theorem}\label{maintheorem}
    Suppose there are at least $n$ MUBs in a Hilbert space of dimension $d$. Then, a state $\ket{ \psi}_{AB}=\sum_{i=0}^{d-1} \sqrt{\lambda_i} \ket{ii}$ is fully nonlocal if
\begin{equation}\label{complicatedcorrolary}
     \frac{2}{n}\lambda_{max}+\frac{n-2}{n}\left[1-(d-\sqrt{d})\lambda_{min}\right]^2\leq \frac{n-2}{2n-2} \,.
\end{equation}
In particular, this is true if
\begin{equation}\label{simplecorrolary}
     \max\left(\sqrt{\lambda_{max}},\, 1-(d-\sqrt{d})\lambda_{min}\right) \leq \sqrt{ \frac{n-2}{2n-2} }  \, .
\end{equation}
\end{theorem}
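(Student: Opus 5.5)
The plan is to use Theorem~\ref{thm:antidistinguishable} with Alice measuring in the $n$ MUBs described above, the first being the Schmidt basis. Each tuple $\boldsymbol{\alpha}=[\alpha_1,\dots,\alpha_n]$ yields a set of $n$ pure post-measurement states $\{\ket*{\psi^B_{\alpha_x|x}}\}_{x\in[n]}$ on Bob's side, given by Eq.~\eqref{eq:postmeasurement_states_MUB}. It suffices to show that every such set is antidistinguishable. I will check this through the Frobenius-norm criterion of Theorem~\ref{thmJohnston}, using the overlap bounds of Eq.~\eqref{eq:upperbound1} (established in App.~\ref{app:overlap_bounds}, which I take as given).

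For a fixed $\boldsymbol{\alpha}$, I would expand $\norm{G}_F^2$ into its three kinds of terms:
\begin{itemize}
\item the $n$ diagonal entries, each equal to $1$;
\item the off-diagonal entries involving the Schmidt-basis state ($x=1$), of which there are $2(n-1)$, each with modulus squared at most $\lambda_{max}$;
\item the remaining $(n-1)(n-2)$ off-diagonal entries between two non-Schmidt bases, each with modulus squared at most $[1-(d-\sqrt{d})\lambda_{min}]^2$.
\end{itemize}
This gives
\begin{equation*}
\norm{G}_F^2\leq n+2(n-1)\lambda_{max}+(n-1)(n-2)\left[1-(d-\sqrt{d})\lambda_{min}\right]^2 .
\end{equation*}
Requiring this to be at most $n^2/2$, subtracting $n$, and dividing by $n(n-1)$ turns the condition into
\begin{equation*}
\frac{2}{n}\lambda_{max}+\frac{n-2}{n}\left[1-(d-\sqrt{d})\lambda_{min}\right]^2\leq \frac{n^2/2-n}{n(n-1)}=\frac{n-2}{2n-2},
\end{equation*}
which is exactly Eq.~\eqref{complicatedcorrolary}. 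Since the bound holds uniformly in $\boldsymbol{\alpha}$, every set $\rho^B_{\boldsymbol{\alpha}}$ is antidistinguishable, and Theorem~\ref{thm:antidistinguishable} gives full nonlocality.

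For the simpler criterion~\eqref{simplecorrolary}, let $c^2$ denote the maximum of $\lambda_{max}$ and $[1-(d-\sqrt{d})\lambda_{min}]^2$. The left side of~\eqref{complicatedcorrolary} is a convex combination with weights $2/n$ and $(n-2)/n$, so it is at most $c^2$, and Eq.~\eqref{simplecorrolary} states that $c^2\leq (n-2)/(2n-2)$. This is equivalently the pairwise version of Theorem~\ref{thmJohnston}. One small point to verify is that $1-(d-\sqrt{d})\lambda_{min}\geq 0$, so that squaring the bound is legitimate. This holds because $\lambda_{min}\leq 1/d$, hence $(d-\sqrt{d})\lambda_{min}\leq 1-1/\sqrt{d}<1$.

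The only nontrivial ingredient is the overlap bound~\eqref{eq:upperbound1} for $x,x'\neq1$, which is deferred to the appendix. If I had to prove it, I would write the overlap as $|\sum_j\lambda_j e^{\mathrm i\theta_j}|$, where $\sum_j e^{\mathrm i\theta_j}$ has modulus at most $\sqrt{d}$ by unbiasedness. I would then split $\lambda_j=\lambda_{min}+(\lambda_j-\lambda_{min})$ and apply the triangle inequality, giving
\begin{equation*}
\lambda_{min}\sqrt{d}+(1-d\lambda_{min})=1-(d-\sqrt{d})\lambda_{min}.
\end{equation*}
The remaining bound, for $x=1$, is immediate: the overlap is $|\braket*{a}{\psi^B_{a'|x'}}|=\sqrt{\lambda_a}\leq\sqrt{\lambda_{max}}$.
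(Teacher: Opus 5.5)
Your proposal is correct and follows essentially the same route as the paper's proof in App.~\ref{app:main_thm}. In both, Alice measures in $n$ MUBs with the first being the Schmidt basis, each Gram matrix satisfies $\norm{G}_F^2\leq n+2(n-1)\lambda_{max}+(n-1)(n-2)[1-(d-\sqrt{d})\lambda_{min}]^2$ by the overlap bounds of App.~\ref{app:overlap_bounds}, and Theorems~\ref{thmJohnston} and~\ref{thm:antidistinguishable} conclude. Two differences are minor: the paper proves a slightly stronger version with $\lambda_{min}$ replaced by an optimized $\lambda$, while your convex-combination argument for Eq.~\eqref{simplecorrolary}, with its nonnegativity check, spells out a step the paper leaves implicit.
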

We refer the reader to App.~\ref{app:main_thm} where a stronger version of this theorem is proved.
\begin{result}\label{res:result_1}
Non-maximally entangled states with full nonlocality exist in all dimensions $d\geq 3$.
\end{result}
\begin{proof}
For systems of dimension $d=3$, also known as qutrits, it is known that the number of MUBs is $n=4$, and the bounds in Theorem~\ref{maintheorem} are satisfied if and only if $\lambda_{max}=\lambda_{min}=1/3$. This demonstrates that the maximally entangled state is fully nonlocal, but tells us nothing about partially entangled states. However, in App.~\ref{appsec:qutrits_fnl} we provide another sufficient condition for full nonlocality using $5$ qutrit measurements for Alice, thus proving that fully nonlocal non-maximally entangled states exist in $d=3$.
For $d=4$ there are exactly $5$ MUBs and the condition in Eq.~\eqref{simplecorrolary} for full nonlocality becomes
\begin{equation}
    \max\left( \sqrt{\lambda_{max}}, \, 1-2\lambda_{min}\right)\leq \sqrt{ \frac{3}{8} } \, ,
\end{equation}
which is satisfied whenever $\lambda_{max}\leq 3/8=0.375$ and $\lambda_{min}\geq (1/2)-\sqrt{ 3/32 }\approx 0.194$. 

For $d\geq 5$, we note that there are at least three MUBs~\cite{mcnulty_mutually_2024}, and the condition in Eq.~\eqref{simplecorrolary} can be simplified to
\begin{equation}
    \max\left( \sqrt{\lambda_{max}}, \, 1-(d-\sqrt{d})\lambda_{min}\right)\leq \frac{1}{2}\,,
\end{equation}
which is satisfied whenever $\lambda_{max}\leq 1/4$ and $\lambda_{min}\geq {1}/(2(d-\sqrt{d}))$. It is simple to check that there are nontrivial values of $\lambda_{min}$ and $\lambda_{max}$ within these bounds such that $\lambda_{min} < \lambda_{max}$ whenever $d \ge 5$. Hence, there are partially entangled states that demonstrate full nonlocality also in every dimension $d\geq 5$.
\end{proof}
\begin{corollary}
    There are states with a relatively large coefficient $\lambda_{max}> 1/2$ that demonstrate full nonlocality. 
\end{corollary}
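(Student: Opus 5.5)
The plan is to use the sharper condition~\eqref{complicatedcorrolary} of Theorem~\ref{maintheorem} rather than the simplified condition~\eqref{simplecorrolary}. The simplified condition forces $\lambda_{max}\le \frac{n-2}{2n-2}<\frac12$, so it can never produce such a state. In~\eqref{complicatedcorrolary}, however, $\lambda_{max}$ enters only with weight $2/n$. If we take a dimension with many MUBs, this term becomes negligible, and only the term involving $\lambda_{min}$ has to be controlled.

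Concretely, let $d$ be a prime (or prime power). Then $n=d+1$ MUBs exist, so Theorem~\ref{maintheorem} applies with $n=d+1$. Consider the family
\begin{equation*}
\lambda_0=\mu,\qquad \lambda_1=\dots=\lambda_{d-1}=\frac{1-\mu}{d-1},
\end{equation*}
with $\mu\in(1/2,1)$ fixed. For $d$ large enough that $(1-\mu)/(d-1)\le \mu$, we have $\lambda_{max}=\mu$ and $\lambda_{min}=(1-\mu)/(d-1)$. A short computation gives
\begin{equation*}
(d-\sqrt d)\,\lambda_{min}=(1-\mu)\,\frac{\sqrt d}{\sqrt d+1},
\end{equation*}
so the bracket in~\eqref{complicatedcorrolary} equals $1-(1-\mu)\frac{\sqrt d}{\sqrt d+1}$. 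This tends to $\mu$ as $d\to\infty$.

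Taking the limit on both sides of~\eqref{complicatedcorrolary} with $n=d+1$:
\begin{itemize}
\item the left-hand side tends to $0\cdot\mu+1\cdot\mu^2=\mu^2$;
\item the right-hand side $\frac{n-2}{2n-2}$ tends to $1/2$.
\end{itemize}
Hence for any $\mu<1/\sqrt2$ the strict inequality holds in the limit, and therefore for all sufficiently large primes $d$. For instance, with $\mu=0.6$ the limiting values are $0.36$ versus $0.5$, which leaves ample margin. Theorem~\ref{maintheorem} then certifies full nonlocality of a state with $\lambda_{max}=0.6>1/2$.

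The only real work is turning the limit into an explicit finite dimension, so the statement does not depend on an asymptotic argument. The first thing I would try is to bound the left-hand side from above using $\frac{\sqrt d}{\sqrt d+1}\ge 1-\frac1{\sqrt d}$, and the right-hand side from below by $\frac12-\frac1{2d}$. Then I would solve for a threshold on $d$ for $\mu=0.6$; it should come out of modest size, on the order of tens. If desired, I would instead simply evaluate~\eqref{complicatedcorrolary} numerically at one explicit prime such as $d=101$ and report the values.
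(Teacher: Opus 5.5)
Your proposal is correct and follows the same route as the paper: take a prime-power dimension so that $n=d+1$ MUBs exist, put one large Schmidt coefficient and make the remaining ones equal, and verify the sharper condition~\eqref{complicatedcorrolary} rather than~\eqref{simplecorrolary}. The paper just checks explicit numerical instances ($\lambda_{max}=0.67$ at $d=101$, $\lambda_{max}=0.51$ at $d=8$); your limit argument additionally shows that any $\lambda_{max}<1/\sqrt{2}$ is attainable in a large enough prime dimension, and for $\mu=0.6$ the inequality already holds at $d=19$.
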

These states are interesting because they cannot be deterministically transformed to any maximally entangled state by local operations and classical communication (LOCC) \cite{Nielsen1999}. In fact, while not following from the proofs, one could intuitively expect that the previous results on full nonlocality of non-maximally entangled states could be understood in terms of the full nonlocality of the maximally entangled state, when allowing for arbitrary preprocessing via LOCC. The previous corollary rules out this possibility. To construct such an example, choose a large prime power dimension, say $d=101$, where it is known that $n=d+1=102$ MUBs exist. Eq.~\eqref{complicatedcorrolary} then shows that the state $\ket{ \psi}_{AB}=\sqrt{0.67}\ket{00}+\sum_{j=1}^{101} \sqrt{0.0033} \ket{jj}$ is fully nonlocal. The smallest dimension in which we can construct such an example is $d=8$, for which $9$ MUBs exist and the state $\ket{ \psi}_{AB}=\sqrt{0.51}\ket{00}+\sum_{j=1}^{7} \sqrt{0.07} \ket{jj}$ is fully nonlocal according to Eq.~\eqref{complicatedcorrolary}.
\begin{result} 
    For every pure entangled bipartite state $\ket*{\psi}_{AB}$, there exists an integer $k$ such that the $k-$copy state $\ket*{\psi}_{AB}^{\otimes k}$ is fully nonlocal.
\end{result}
\begin{proof}
This activation of full nonlocality is a simple extension of the qubit case in Sec.~\ref{sec:activating_qubits}. It follows from Eq.~\eqref{eq:upperbound1} that when Alice and Bob share a single copy pure entangled state of the form Eq.~\eqref{eq:Schmidt_form} and Alice measures along the $n\geq 3$ MUBs, the overlaps of the post-measurement states obey, for $x\neq x'$,
\begin{equation}
    |\braket*{ \psi^B_{a|x}}{ \psi^B_{a'|x'}}|\leq \max\left( \sqrt{\lambda_{max}},\,  1-(d-\sqrt{d})\lambda_{min}\right)\,.
\end{equation}
The right hand side in the equation above may not be sufficient to certify full nonlocality of a single copy of the entangled state. However, for every pure entangled state ($\lambda_{max}< 1$ and $\lambda_{min}> 0$) we have $c\coloneqq\max (\sqrt{\lambda_{max}}, 1-(d-\sqrt{d})\lambda_{min})<1$ whenever $d \geq 2$. Choose an integer $k$ that is large enough to satisfy $c^k\leq \sqrt{({n-2})/({2n-2})}$. If Alice and Bob share $k$ copies of the entangled state and Alice performs one of the $n$ MUB measurements on each copy like in the case for qubit pairs, the inner product of the resulting post-measurement states is upper bounded by $c^k$. Theorem~\ref{thmJohnston} then guarantees that all the resulting sets of post-measurement states $\ket*{ \psi_{\boldsymbol{\alpha}}}$ that appear in Theorem~\ref{thm:antidistinguishable} are antidistinguishable, and hence the distribution is fully nonlocal. 
\end{proof}
Note that the strategy above to generate fully nonlocal correlations is not optimal in the number of copies required.  For instance, consider two copies of the two qubit state $\ket{ \psi}_{AB}^{\otimes 2}=(\sqrt{p}\ket{00}+\sqrt{1-p}\ket{11})^{\otimes2}$ with $p\ge 1/2$, which can also be seen as a single copy state in dimension $d=4$ with $\lambda_{max}=p^2$ and $\lambda_{min}=(1-p)^2$. Using Eq.~\eqref{simplecorrolary}, this state is fully nonlocal if $\lambda_{max}=p^2\leq 3/8$ and $\lambda_{min}=(1-p)^2\geq 1/2-\sqrt{3/32}$ which is true when $1/2\leq p\leq 1-\sqrt{1/2-\sqrt{3/32}}\approx 0.560$. Hence, some non-maximally entangled two-qubit states only need two copies to show full nonlocality. However, if Alice performs separable qubit MUBs on the two copies as in the strategy described above (there are exactly three MUBs for $d=2$ which are precisely the $X$, $Y$, and $Z$ qubit observables used in Sec.~\ref{sec:activating_qubits}), only the two-qubit maximally entangled state can be shown to be fully nonlocal.

The measurements that we have used to activate full nonlocality do not depend on the specific entangled state. We might find better bounds and more examples when adapting the measurements according to the state. We have also assumed very little about the structure of the MUBs. It might be possible to derive better bounds by using the explicit form of the MUBs (i.e., the explicit phases $\phi^j_{a|x}$), or increasing the number of bases (even if it means allowing larger scalar products). In fact, this is precisely what allowed us to find fully nonlocal partially entangled two-qutrit states in App.~\ref{appsec:qutrits_fnl}.

\section{Local content for some partially entangled states}
So far, we asked which states exhibit full nonlocality; now we want to ask which states do not. Recall that the local content $\LC$ of a quantum state $\rho_{AB}$ is defined as the smallest local content of any distribution that can be generated by performing local quantum measurements on it~\cite{Elitzur1992,Scarani2008}.  Consider again an arbitrary finite dimensional pure entangled state as in Eq.~\eqref{eq:Schmidt_form}. We show that if the maximal eigenvalue $\lambda_{max}$ is large enough, then the state must have a nonzero local content. For simplicity, we will assume $\lambda_0\geq\lambda_1\geq\cdots\geq\lambda_{d-1}$ without loss of generality, so that $\lambda_{max} = \lambda_0$.

Recall that if there exists a set of outcomes $\boldsymbol{\alpha}=[\alpha_x]_x$ for every input $x$ of Alice and a set of outcomes $\boldsymbol{\beta}=[\beta_y]_y$ for every input $y$ of Bob such that 
\begin{equation}
    \forall x,y, \quad p_Q(a=\alpha_x,b=\beta_y|x,y)\ge q,
\end{equation}
then the distribution $p_Q$ has local content at least $q$ (see Eqs.~\eqref{appeq:nonzero_local_content},~\eqref{appeq:nonzero_local_content2}). Therefore, to show that a state $\ket{\psi}_{AB}$ has a nonzero local content, it is enough to prove that for every extremal measurement $A_x = \{A_{a|x}\}_a$ of Alice there is always an outcome $a$, and for every extremal measurement $B_y = \{B_{b|y}\}_b$ of Bob there is always an outcome $b$, such that
\begin{equation}
    \Tr[A_{a|x} \otimes B_{b|y}\  \ketbra{\psi}{\psi}_{AB}] =  p_Q(a,b|x,y) \geq q> 0 \,.
\end{equation}
We will use the following theorem, which is proved in App.~\ref{app:localcontent}, to lower bound the local content of partially entangled states.
\begin{theorem}\label{thm:local_content}
    Consider a $d$-dimensional entangled state $\ket{ \psi}=\sum_{i=0}^{d-1}\sqrt{\lambda_i}\ket{ii}$, where $\lambda_0\geq\lambda_1\geq\cdots\geq\lambda_{d-1}$. If Alice and Bob perform rank-$1$ measurements with at most $m_A$ and $m_B$ outcomes respectively on this state, then, for every measurement choice of Alice and Bob, there exists a POVM element $A = \ketbra{A}$ for Alice and $B=\ketbra{B}$ for Bob such that
    \begin{equation}\label{eq:local_content_bound}
     \lvert \braket{ \psi}{A\otimes B} \rvert \ge \frac{\sqrt{\lambda_0}-\sqrt{\lambda_1}\sqrt{(m_A-1)(m_B-1)}}{\sqrt{m_A m_B}}\,.
    \end{equation}
\end{theorem}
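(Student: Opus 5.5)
Our strategy is to write the amplitude as a bilinear form, pick Alice's and Bob's elements from the components of the POVM vectors along $\ket{0}$, and treat everything orthogonal to $\ket{0}$ as an error term. A rank-$1$ POVM element can be written as $A_a=\ketbra{A_a}$ with unnormalized vectors satisfying $\sum_a \ketbra{A_a}=\id$ (and likewise $\sum_b\ketbra{B_b}=\id$). For such vectors
\begin{equation*}
\braket{\psi}{A_a\otimes B_b}=\sum_{i}\sqrt{\lambda_i}\,\braket{i}{A_a}\braket{i}{B_b}.
\end{equation*}
We split this into the $i=0$ term, $\sqrt{\lambda_0}\,u_a v_b$ with $u_a=\braket{0}{A_a}$ and $v_b=\braket{0}{B_b}$, and a remainder $\sum_{i\ge1}\sqrt{\lambda_i}\braket{i}{A_a}\braket{i}{B_b}$. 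Since $\lambda_i\le\lambda_1$ for $i\ge 1$, Cauchy--Schwarz bounds the remainder by $\sqrt{\lambda_1}\,\sqrt{1-|u_a|^2}\,\sqrt{1-|v_b|^2}$. Here we use $\norm{A_a}\le 1$, so the part of $\ket{A_a}$ orthogonal to $\ket{0}$ has norm at most $\sqrt{1-|u_a|^2}$. Hence
\begin{equation*}
|\braket{\psi}{A_a\otimes B_b}|\ge \sqrt{\lambda_0}\,|u_a||v_b|-\sqrt{\lambda_1}\sqrt{(1-|u_a|^2)(1-|v_b|^2)}.
\end{equation*}

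Completeness gives $\sum_a|u_a|^2=\bra{0}\sum_a A_a\ket{0}=1$, and similarly $\sum_b|v_b|^2=1$. We therefore choose $a$ maximizing $|u_a|$, which guarantees $|u_a|^2\ge 1/m_A$, and likewise choose $b$ with $|v_b|^2\ge 1/m_B$. The right-hand side above is increasing in $|u_a|$ and in $|v_b|$: the first term grows with them and the subtracted term shrinks. So we may substitute the worst case $|u_a|^2=1/m_A$, $|v_b|^2=1/m_B$, which gives
\begin{equation*}
\frac{\sqrt{\lambda_0}}{\sqrt{m_Am_B}}-\sqrt{\lambda_1}\sqrt{\tfrac{m_A-1}{m_A}\cdot\tfrac{m_B-1}{m_B}},
\end{equation*}
which is exactly Eq.~\eqref{eq:local_content_bound}.

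The only delicate point is the remainder bound. It needs $\sum_{i\ge1}|\braket{i}{A_a}|^2=\norm{A_a}^2-|u_a|^2\le 1-|u_a|^2$, and the inequality $\norm{A_a}^2\le1$ holds because $\ketbra{A_a}\preceq\id$. It also needs the factor $\sqrt{\lambda_i}\le\sqrt{\lambda_1}$, which the ordering assumption $\lambda_0\ge\lambda_1\ge\cdots$ provides. If the right-hand side of Eq.~\eqref{eq:local_content_bound} is negative, the statement is trivial. Monotonicity is also unproblematic: the function $t\mapsto\sqrt{1-t^2}$ is decreasing, so the worst case really is the minimal values of $|u_a|$ and $|v_b|$.
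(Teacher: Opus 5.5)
Your proposal is correct and follows essentially the same route as the paper: it uses completeness to pick elements with $|\langle 0|A\rangle|^2\ge 1/m_A$ and $|\langle 0|B\rangle|^2\ge 1/m_B$, isolates the $j=0$ term, and bounds the remainder by $\sqrt{\lambda_1}$ times a Cauchy--Schwarz factor using $\langle A|A\rangle\le 1$. Your monotonicity step simply repackages the paper's two separate substitutions, $|a_0b_0|\ge 1/\sqrt{m_Am_B}$ and $\sum_{j\ge1}|a_j|^2\le (m_A-1)/m_A$ (and likewise for Bob).
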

The above inequality is nontrivial when $\lambda_0 > \lambda_1 (m_A-1)(m_B-1)$ and demonstrates a nonzero local content for the correlations generated since $\lvert \braket{ \psi}{A\otimes B} \rvert^2 ={\Tr}[\ketbra{ \psi}A \otimes B] > 0$. We now use this theorem to prove our results. We first consider the case when Alice and Bob perform projective measurements, where we can obtain the following lower bound on the local content.
\begin{result}\label{localcontentproj}
    A $d$-dimensional entangled state $\ket{ \psi}=\sum_{i=0}^{d-1}\sqrt{\lambda_i}\ket{ii}$, where $\lambda_0\geq\lambda_1\geq\cdots\geq\lambda_{d-1}$ and $\lambda_0 > \lambda_1 (d-1)^2$, has a nonzero local content $\LC$ at least 
    \begin{equation}\label{eq:local_content_bound_projective}
        \LC \ge \left[\frac{\sqrt{\lambda_0}-\sqrt{\lambda_1}(d-1)}{d}\right]^2 \, ,
    \end{equation}
    if Alice and Bob perform projective measurements.
\end{result}
\begin{proof}
    The proof is a straightforward application of Theorem~\ref{thm:local_content} to projective measurements. We consider first rank-$1$ projective measurements. Since there are exactly $d$ outcomes for a rank-$1$ projective measurement, we can substitute $m_A = m_B = d$ in the inequality~\eqref{eq:local_content_bound} to obtain
    \begin{equation}
        \begin{aligned}
                {\Tr}[A \otimes B \ketbra{ \psi} ] &= \lvert \braket{ \psi}{A\otimes B} \rvert^2 \\
                &\ge \left[\frac{\sqrt{\lambda_0}-\sqrt{\lambda_1}(d-1)}{d}\right]^2 ,
        \end{aligned}
    \end{equation} 
    which must hold for some projector $A$ of Alice and $B$ of Bob for every measurement choice.
    Furthermore, we can always split a rank-$k$ projector into $k$ rank-$1$ projectors, increasing the number of outcomes associated with the projector from $1$ to $k$. The new distribution generated by such splitting of projectors can be used to obtain the original distribution by classically binning the $k$ outcomes into a single outcome. Such local classical post-processing cannot decrease the local content, and hence the new distribution (obtained by splitting the projectors) has a local content that is less than or equal to that of the original distribution. Therefore, proving a lower bound on the local content for rank-$1$ projective measurements is equivalent to proving a lower bound for all projective measurements. 
\end{proof}
For $d=2$, this directly shows that all partially entangled states (i.e., $\lambda_0>\lambda_1$) have a nonzero local content with projective measurements, which was known before~\cite{Elitzur1992,Scarani2008,Branciard2010,Portmann2012,Renner2023}. Further, for the special class of states in which $\lambda_1 = \cdots = \lambda_{d-1}$, we have $\lambda_0+\lambda_1 (d-1) =1$, and hence the condition $\lambda_0 > \lambda_1 (d-1)^2$ for nonzero local content is equivalent to $\lambda_0 > (d-1)/d$. For $d=3$, this gives us an analytic proof for the result by Scarani \cite{Scarani2008} where it was shown numerically that when $\lambda_0>2/3$, the state $\ket{ \psi} = \sqrt{\lambda_0} \ket{00} + \sqrt{\lambda_1} (\ket{11} + \ket{22})$ has a nonzero local content.

We next remove the restriction to projective measurements and generalize our result to arbitrary POVMs. 
\begin{result}\label{res:local_content_bound_povm}
    A $d$-dimensional entangled state $\ket{ \psi}=\sum_{i=0}^{d-1}\sqrt{\lambda_i}\ket{ii}$, where $\lambda_0\geq\lambda_1\geq\cdots\geq\lambda_{d-1}$ and $\lambda_0 > \lambda_1 (d^2-1)^2$, has a nonzero local content $\LC$ at least
    \begin{equation}\label{eq:local_content_bound_povm}
        \LC \ge \left[\frac{\sqrt{\lambda_0}-\sqrt{\lambda_1}(d^2-1)}{d^2}\right]^2 \, .
    \end{equation}
\end{result}
\begin{proof}
    We can use Naimark's dilation theorem~\cite{holevo_quantum_2012}, which implies that a $d$-dimensional POVM can be expressed as a projective measurement in a  $d^2$-dimensional Hilbert space, and the bound simply follows by replacing $d$ with $d^2$ in Result~\ref{localcontentproj}. Alternatively, one can use the fact that in dimension $d$ every non-extremal POVM can be written as a convex mixture of extremal rank-$1$ POVMs with at most $d^2$ outcomes~\cite{Dariano2005}. The result then follows by putting $m_A=m_B=d^2$ in Theorem~\ref{thm:local_content} to obtain
    \begin{equation}
        \begin{aligned}
                {\Tr}[\ketbra{ \psi} A \otimes B] &= \lvert \braket{ \psi}{A\otimes B} \rvert^2 \\
                &\ge \left[\frac{\sqrt{\lambda_0}-\sqrt{\lambda_1}(d^2-1)}{d^2}\right]^2 ,
        \end{aligned}
    \end{equation} 
    which must hold for some POVM element $A$ of Alice and $B$ of Bob for every measurement choice. 

\end{proof}

\section{Conclusion}
In this work, we established a fundamental link between full nonlocality of bipartite quantum correlations and antidistinguishability of quantum states. Exploiting this connection, we derived simple sufficient criteria that certify when a given pure entangled state is fully nonlocal, and we showed that full nonlocality is not confined to maximally entangled states: non-maximally entangled states exhibiting full nonlocality exist in every local dimension $d\geq 3$. We further proved that for any pure entangled state there exists a finite number of copies for which the resulting system is fully nonlocal. 
At the same time, we identified families of pure states in every dimension whose local content is strictly positive, and which therefore do not exhibit full nonlocality even when arbitrary measurements are allowed in the single-copy scenario. Closing the gap between these sufficient conditions by characterizing exactly which pure states are fully nonlocal is an interesting open question. Extending the analysis to mixed states, generalizing the antidistinguishability-based framework to multipartite correlations~\cite{Greenberger1989, Renner2002,almeida_multipartite_2010,aolita_fully_2012,makuta_all_2025}, and understanding the implications for device-independent tasks are some important directions for future work.


\section{Acknowledgements}
E.P.L. acknowledges hospitality from the QIT group at ICFO.

M.J.R. acknowledges financial support through the Juan de la Cierva postdoctoral fellowship (La ayuda JDC2024-055405-I, financiada por MICIU/AEI/10.13039/501100011033 y por el FSE+.) M.J.R. and A.A. acknowledge financial support through the Government of Spain (Severo Ochoa CEX2019-000910-S and FUNQIP), Fundació Cellex, Fundació Mir-Puig, Generalitat de Catalunya (CERCA program), the European Union (NEQST 101080086) and the AXA Chair in Quantum Information Science.

E.P.L. acknowledges funding from the F.R.S.--FNRS through PDR T.0171.22 and from the FWO and F.R.S.--FNRS under the Excellence of Science (EOS) programme (project 40007526). E.P.L. also acknowledges support from the F.R.S.--FNRS through a FRIA grant. Part of this work was carried out during a research visit of E.P.L. to ICFO, supported by the F.R.S.--FNRS mobility funding.

A.K. acknowledges the support by the (Polish) National Science Center through the SONATA BIS Grant No. 2019/34/E/ST2/00369. The Center for Quantum-Enabled Computing project is carried out within the International Research Agendas programme of the Foundation for Polish Science co-financed by the European Union under the European Funds for Smart Economy 2021-2027 (FENG).

\bibliography{bibfilemartin2.bib}

\onecolumngrid
\appendix

\section{Characterization of antidistinguishing measurements}\label{app:anti_meas}
\begin{observation}\label{appobs:antidistinguishability}
    A set of states $\{\rho_j\}_{j\in[n]}$ is antidistinguishable if and only if there exists a measurement $\{M_j\}_{j\in[k]}$ satisfying
    \begin{equation}\label{appeq:antidistinguishing_measurement}
        \forall M \in \{M_j\}_{j\in[k]} \,, \exists \rho \in \{\rho_j\}_{j\in[n]} \, \textnormal{s.t.} \, \Tr[M\rho]=0\,,
    \end{equation}
    that is, an antidistinguishing measurement.
\end{observation}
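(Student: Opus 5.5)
The two directions are handled separately. The forward direction is immediate: if $\{\rho_j\}_{j\in[n]}$ is antidistinguishable, there is a measurement $\{M_i\}_{i\in[n]}$ with $\Tr[M_i\rho_i]=0$ for every $i$. Taking $k=n$ and, for each element $M_i$, the witness $\rho=\rho_i$, condition~\eqref{appeq:antidistinguishing_measurement} holds. So every antidistinguishing measurement in the original sense is also one in the weaker sense.

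For the converse, start from a measurement $\{M_j\}_{j\in[k]}$ satisfying~\eqref{appeq:antidistinguishing_measurement}, with $k$ possibly different from $n$. Coarse-grain it into an $n$-outcome measurement. For each $j\in[k]$, choose one index $f(j)\in[n]$ such that $\Tr[M_j\rho_{f(j)}]=0$; such an index exists by assumption, and if several exist we pick the smallest. Then define
\begin{equation*}
N_i \coloneqq \sum_{j:\, f(j)=i} M_j \qquad (i\in[n]),
\end{equation*}
with the convention that $N_i=0$ when $f^{-1}(i)$ is empty.

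Next, verify that $\{N_i\}_{i\in[n]}$ is a valid POVM. Each $N_i$ is a sum of positive semidefinite operators, and the zero operator is also positive semidefinite, so $N_i\succeq 0$. Since $f$ partitions $[k]$, we have $\sum_i N_i=\sum_j M_j=\id$.

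Finally, check the antidistinguishing property. By linearity of the trace,
\begin{equation*}
\Tr[N_i\rho_i]=\sum_{j:\,f(j)=i}\Tr[M_j\rho_{f(j)}]=0
\end{equation*}
for every $i\in[n]$. Hence $\{N_i\}$ antidistinguishes $\{\rho_j\}_{j\in[n]}$. Equivalently, the SDP~\eqref{eq:SDP} has optimal value $0$.

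There is no real obstacle in this argument. The only point needing care is the bookkeeping: allowing empty preimages, which is handled by setting $N_i=0$, and noting that several $M_j$ may exclude the same state. Assigning each $j$ to exactly one $i$ via the function $f$ avoids double counting.
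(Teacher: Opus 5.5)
Your proposal is correct and takes essentially the same approach as the paper: you coarse-grain the given measurement by assigning each element $M_j$ to exactly one state it excludes and summing. Your function $f$ plays the role of the paper's sets $\mathcal{S}_i$ with duplicates deleted, and indexing by $j$ rather than by operators has the small advantage of handling repeated POVM elements cleanly.
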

\begin{proof}
    First, we show that $\{\rho_j\}_{j\in[n]}$ is antidistinguishable if Eq.~\eqref{appeq:antidistinguishing_measurement} is satisfied. Define $N_i$ as the sum of all the operators $M \in \{M_j\}_{j\in [k]}$ such that $\Tr[M \rho_i]=0$, i.e., $N_i \coloneqq \sum_{M\in\mathcal{S}_i} M$\,, where $\mathcal{S}_i = \{M\in\{M_j\}_{j\in[k]}:\Tr[M\rho_i]=0\}$. If the set $\mathcal{S}_i$ is empty for some $i\in[n]$, then define $N_i\coloneqq0$. Clearly, for all $i\in[n]$, $\Tr[N_i \rho_i]=0$. However, $\{N_i\}_{i\in[n]}$ may not form a valid measurement if some $M \in \{M_j\}_{j\in[k]}$ appears in two or more sets $\mathcal{S}_i$. To rectify the situation, we can pick some set $\mathcal{S}_i$ containing $M$ and delete $M$ from all the other sets $\mathcal{S}_j\neq \mathcal{S}_i$. Thus, we have constructed a valid measurement $\{N_i\}_{i\in[n]}$ that satisfies, for all values of $i\in [n]$, $\Tr[N_i \rho_i]=0$, which shows that the set $\{\rho_j\}_{j\in[n]}$ is antidistinguishable.

    The `only if' statement is trivial, since an antidistinguishable set $\{\rho_j\}_{j\in[n]}$ implies, by definition, that there exists a measurement $\{M_j\}_{j\in[n]}$ satisfying Eq.~\eqref{appeq:antidistinguishing_measurement}, i.e., $\forall M_i \in \{M_j\}_{j\in[n]} \,, \exists \rho_i \in \{\rho_j\}_{j\in[n]} \, \text{s.t.} \, \Tr[M_i\rho_i]=0\,$.
\end{proof}

\section{Proof of Theorem~\ref{thm:antidistinguishable}}\label{appsec:fnl_vs_antidistinguishability}
In this section we prove the `only if' statement of Theorem~\ref{thm:antidistinguishable}: If for a given state and some measurements on Alice's side there is at least one $\boldsymbol{\alpha}$ such that the post-measurement states $\rho^B_{\boldsymbol{\alpha}} \coloneqq \{\rho^B_{a=\alpha_1|x=1},\dots,\rho^B_{a=\alpha_n|x=n}\}$ are not antidistinguishable, then the resulting correlations always have a local content, irrespective of the measurements performed by Bob. 

We begin by recalling that the local content of a quantum distribution $p_Q$ is defined as the maximum value of $q$ in any decomposition of the form
\begin{equation}\label{appeq:local_content_decomp}
    p_Q(a,b|x,y)=  q \cdot \sum_{\boldsymbol{\alpha},\boldsymbol{\beta}} p(\boldsymbol{\alpha},\boldsymbol{\beta}) \, D_{\boldsymbol{\alpha},\boldsymbol{\beta}}(a,b|x,y)+(1-q)\cdot p_{NS}(a,b|x,y)\,,
\end{equation}
where $D_{\boldsymbol{\alpha},\boldsymbol{\beta}}(a,b|x,y) = \delta_{{\alpha}_x,a}\, \delta_{{\beta}_y,b}$. If there exists a value of $(\boldsymbol{\alpha},\boldsymbol{\beta})$ such that 
\begin{equation}\label{appeq:nonzero_local_content}
  \forall x\in[n],y\in[n']\,,\quad p_Q(a=\alpha_x,b=\beta_y|x,y) \ge q > 0 \,,
\end{equation}
then the distribution $p_Q$ has local content at least $q$. Indeed, letting 
\begin{equation}\label{appeq:nonzero_local_content2}
p_{NS}(a,b|x,y) \coloneqq  \frac{p_Q(a,b|x,y)-q \cdot D_{\boldsymbol{\alpha},\boldsymbol{\beta}}(a,b|x,y)}{1-q} \,,
\end{equation}
we can write $p_Q(a,b|x,y)=  q \cdot \, D_{\boldsymbol{\alpha},\boldsymbol{\beta}}(a,b|x,y)+(1-q)\cdot p_{NS}(a,b|x,y)$.
We can now prove the `only if' statement of Theorem~\ref{thm:antidistinguishable}.
Since the states $\rho^B_{\boldsymbol{\alpha}}$ are not antidistinguishable, from Observation~\ref{appobs:antidistinguishability}, there must exist, for every measurement $y$ of Bob, at least one outcome $b$ such that 
\begin{equation}\label{eq:not_antidistinguishable}
  \forall x\in[n]\,,\quad  \mathrm{Tr}[ B_{b|y} \rho^B_{a=\alpha_x|x}] > 0 \,.
\end{equation}
Pick an outcome $b$, for every measurement $y \in [n']$, that satisfies the above equation and define $\boldsymbol{\beta}\coloneqq[\beta_1,\dots,\beta_{n'}]$, where $\beta_y = b$. We then have
\begin{equation}
    p_Q(a=\alpha_x,b=\beta_y|x,y)=\Tr[(A_{a=\alpha_x|x}\otimes B_{b=\beta_y|y})\rho_{AB}] = p_Q(a=\alpha_x|x )\cdot \mathrm{Tr}[B_{b=\beta_y|y}\,\rho^B_{a=\alpha_x|x}] > 0,
\end{equation}
where $p_Q(a=\alpha_x|x )=\Tr[(A_{a=\alpha_x|x}\otimes \id)\rho_{AB}]$. Hence, from the discussion surrounding Eq.~\eqref{appeq:nonzero_local_content}, the deterministic local strategy $D_{\boldsymbol{\alpha},\boldsymbol{\beta}}$ has a nonzero weight in the decomposition of $p_Q$.

\section{The Peres-Mermin square from the perspective of antidistinguishability}\label{app:Peresmermin}
Here we consider a known example of full nonlocality to demonstrate the methods used in this work. In the Peres-Mermin magic square nonlocal game, two noncommunicating players Alice and Bob are asked to fill, with labels either `$+$' or `$-$', a row $x\in\{1,2,3\}$ and column $y\in\{1,2,3\}$, respectively, of a $3\times 3$ grid called the magic square. Alice must always assign an even number of $-$'s for the row whereas Bob must always assign an odd number of $-$'s for the column. Additionally, the labels assigned by Alice and Bob must agree in the intersecting element of the grid. In a local deterministic strategy, Alice and Bob agree beforehand the labels for every row and column. However, no local deterministic strategy can win the magic square game perfectly since the total number of $-$'s in the grid would be even according to Alice and odd according to Bob, leading to a contradiction. Therefore, any non-signaling distribution $p(a,b|x,y)$, where $a\in\{+++,+--,-+-,--+\}$ and $b\in\{---,-++,+-+,++-\}$, that allows Alice and Bob to win the magic square game perfectly is fully nonlocal (see 
Sec.~\ref{sec:geometric_int_fnl}).

There is a quantum distribution $p_Q$ that can perfectly win the magic square game, and is hence fully nonlocal~\cite{aravind_quantum_2004}. In particular, it is conjectured to be the simplest fully nonlocal strategy in dimension $4\times 4$~\cite[Conjecture~1]{cabello_simplest_2025}. The quantum strategy uses the maximally entangled state $\ket{ \psi}=(\ket{00}+\ket{11}+\ket{22}+\ket{33})/2$ and the projective measurements $\{A_x\}_{x}$ ($A_x = \{\ketbra*{A_{a|x}}\}_a$) for Alice and $\{B_y\}_{y}$ ($B_y = \{\ketbra*{B_{b|y}}\}_b$) for Bob shown in Tab.\ref{tab:magic_square}.

\begin{table}[hbt!]
    \centering
    \begin{tabular}{c@{\hspace{0.5em}}c}
    \begin{minipage}[t]{0.55\linewidth}
        \centering
        \[
        \begin{array}{c||c|c|c|}
         & \ket*{A_{a|x=1}} & \ket*{A_{a|x=2}} & \ket*{A_{a|x=3}} \\[6pt]\hline\hline
            & & & \\[-10pt]
        a=+++ &
        \ket{0} &
        \ket{0}+\ket{1}+\ket{2}+\ket{3} &
        \ket{0}-\ket{1}-\ket{2}-\ket{3} \\[6pt]
        a=+-- &
        \ket{2} &
        \ket{0}-\ket{1}+\ket{2}-\ket{3}&
         \ket{0}+\ket{1}-\ket{2}+\ket{3}\\[6pt]
        a=-+- &
        \ket{1} &
        \ket{0}+\ket{1}-\ket{2}-\ket{3} &
        \ket{0}-\ket{1}+\ket{2}+\ket{3} \\[6pt]
        a=--+ &
        \ket{3} &
        \ket{0}-\ket{1}-\ket{2}+\ket{3} &
        \ket{0}+\ket{1}+\ket{2}-\ket{3}
        \end{array}
        \]
        \[
        \begin{array}{c||c|c|c|}
         & \ket*{B_{b|y=1}} & \ket*{B_{b|y=2}} & \ket*{B_{b|y=3}} \\[3pt]\hline\hline
         & & & \\[-10pt]
        b=--- &
        \ket{1}-\ket{3} &
        \ket{2}-\ket{3} &
        \ket{1}-\ket{2} \\[6pt]
        b=-++ &
        \ket{1}+\ket{3} &
        \ket{2}+\ket{3} &
        \ket{1}+\ket{2} \\[6pt]
        b=+-+ &
        \ket{0}-\ket{2} &
        \ket{0}-\ket{1}&
        \ket{0}-\ket{3} \\[6pt]
        b=++- &
        \ket{0}+\ket{2} &
        \ket{0}+\ket{1} &
        \ket{0}+\ket{3}
        \end{array}
        \]
    \end{minipage}
    &
    \begin{minipage}[t]{0.4\linewidth}
               \[
        \begin{array}{cc||*{4}{c}}
            &      & \multicolumn{4}{c}{y=2}\\
            &      & --- & -++ & +-+ & ++-\\\hline\hline
               
                \multirow{4}{*}{$x=1$} & +++  & \textcolor{purple}{0} & \textcolor{purple}{0} & \textcolor{purple}{\times}  & \textcolor{purple}{\times}  \\
                & +--  & \times & \times & \times & \times \\
                & -+-  & \times & \times & \times & \times \\
                & --+  & \times & \times & \times & \times \\\hline

                \multirow{4}{*}{$x=2$}& +++  & \times & \times & \times & \times \\
                & +-- & \times & \times & \times & \times \\
                & -+-  & \textcolor{purple}{\times} & \textcolor{purple}{\times} & \textcolor{purple}{0} & \textcolor{purple}{\times} \\
                & --+  & \times & \times & \times & \times \\\hline

                \multirow{4}{*}{$x=3$} & +++  & \textcolor{purple}{\times} & \textcolor{purple}{\times} & \textcolor{purple}{\times} & \textcolor{purple}{0}\\
                & +-- & \times & \times & \times & \times\\
                & -+- & \times & \times & \times & \times \\
                & --+  & \times & \times & \times & \times
        \end{array}
        \]
    \end{minipage}
    \end{tabular}
        \caption{Left: The unnormalized vectors of the quantum measurements used in the magic square game. Right: A section of the probability table $p_Q$ demonstrating that Alice's local deterministic strategy $\boldsymbol{\alpha} = [+++,-+-,+++]$ (marked in red) is forbidden in the decomposition of $p_Q$. The probabilities marked $\times$ are irrelevant for the argument. Bob's measurement $y=2$ antidistinguishes $\psi^B_{\boldsymbol{\alpha}} \coloneqq \{\ket*{ \psi^B_{a=\alpha_x|x}}\}_x$, and hence every outcome is forbidden for Bob. Indeed, $b=---$ is forbidden because, conditioned on Alice giving the outcome $a=+++$ for $x=1$, the probability of Bob giving the outcome $b=---$ is $0$, i.e., $p_Q(b=---|a=+++,y=2,x=1)=|\braket*{ \psi^B_{a=+++|x=1}}{B_{b=---|y=2}}|^2=0$. Similarly, $b=-++$ is forbidden since it implies that Alice's outcome could not have been $a=+++$ for $x=1$, the outcome $b=+-+$ implies that Alice's outcome could not have been $a=-+-$ for $x=2$, and the outcome $b=++-$ implies that Alice's outcome could not have been $a=+++$ for $x=3$. 
        }\label{tab:magic_square}
\end{table}

It is not difficult to argue that this quantum strategy wins the magic square game.
However, here our aim is to make the connection with antidistinguishability and provide an alternate proof that $p_Q$ is fully nonlocal. 

Observe that the post-measurement states on Bob's side, which are given by $\ket*{ \psi^B_{a|x}} = \ket*{A_{a|x}}$, are such that for every local deterministic strategy $\boldsymbol{\alpha}$ of Alice, the corresponding set of post-measurement states $\psi^B_{\boldsymbol{\alpha}} \coloneqq \{\ket*{ \psi^B_{a=\alpha_x|x}}\}_x$ can be antidistinguished by one of the measurements of Bob. For example, suppose Alice's local deterministic strategy is to output $a=+++$ when $x=1$, $a=-+-$ when $x=2$, and $a=+++$ when $x=3$, i.e., $\boldsymbol{\alpha} = [\alpha_1 = +++, \alpha_2 = -+-, \alpha_3 = +++]$.
Then, the measurement $B_{y=2}$ of Bob is an antidistinguishing measurement for $\psi^B_{\boldsymbol{\alpha}}$. This leads to a probability table for $p_Q$ that forbids the local strategy $\boldsymbol{\alpha}$ from appearing in the decomposition of $p_Q$ as explained in Table~\ref{tab:magic_square}. The proof that $p_Q$ is fully nonlocal then follows from the fact that for every local deterministic strategy $\boldsymbol{\alpha}$ of Alice, the corresponding set $ \psi^B_{\boldsymbol{\alpha}} $ is antidistinguishable, with one of the measurements of Bob being the antidistinguishing measurement.

\newpage
\section{Bounds on the overlap for post-measurement states of MUB measurements}\label{app:overlap_bounds}
Here we prove the bound in Eq.~\eqref{eq:upperbound1}, given here as the following lemma.
\begin{lemma*}
When Alice measures on the state $\ket{ \psi}_{AB}=\sum_{i=0}^{d-1} \sqrt{\lambda_i} \ket{ii}$ along the MUBs given by
\begin{equation}\label{appeq:MUB_measurements}
    \ket{A_{a|x}}=
    \begin{cases}
        \ket{a} \quad &\text{if}\quad x=1\,,\\
     \frac{1}{\sqrt{d}}\,\sum_{j=0}^{d-1} \mathrm e^{\mathrm i\phi^j_{a|x}} \ket{j} &\text{if}\quad x \neq 1,
    \end{cases}
\end{equation}
the post-measurement states on Bob's side 
\begin{equation}\label{appeq:postmeasurement_states_MUB}
    \ket*{ \psi^B_{a|x}}=
    \begin{cases}
        \ket*{a} \quad &\text{if}\quad x=1\,,\\
        \sum_{j=0}^{d-1} \sqrt{\lambda_j}\ \mathrm e^{-\mathrm i\phi^j_{a|x}} \ket{j}  \quad &\text{if} \quad x \neq 1\, ,
    \end{cases}
\end{equation} 
satisfy the following upper bound on their overlap:
\begin{equation}\label{appeq:overlap_bounds}
    \forall  x'>x, \quad
        |\braket*{ \psi^B_{a|x}}{ \psi^B_{a'|x'}}|\leq
        \begin{cases}
        \sqrt{\lambda_{max}}  &\text{if}\quad x=1\,, \\ 
               \min_{{\lambda}\in \{\lambda_j\}_{j}}\left(\sum_j |\lambda_j-{\lambda}| +{\lambda} \sqrt{d}\right)\leq 1-(d-\sqrt{d}\,)\lambda_{min}  &\text{if} \quad x \neq 1\,.
    \end{cases}
\end{equation} 
\end{lemma*}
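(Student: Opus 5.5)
We first record that the post-measurement states in Eq.~\eqref{appeq:postmeasurement_states_MUB} are normalized: $\langle A_{a|x}|\otimes\id$ applied to $\ket{\psi}_{AB}$ gives $\sum_j\sqrt{\lambda_j}\,\overline{\braket{j}{A_{a|x}}}\ket{j}$. For $x=1$ this is proportional to $\ket{a}$. For $x\neq1$ it is $\frac{1}{\sqrt d}\sum_j\sqrt{\lambda_j}e^{-\mathrm i\phi^j_{a|x}}\ket{j}$, whose norm squared is $\sum_j\lambda_j/d=1/d$, so after normalization we get exactly the stated vectors. We then treat the two cases of Eq.~\eqref{appeq:overlap_bounds} separately.

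\emph{Case $x=1$.} Since $x'>x=1$, we have $x'\neq 1$. The overlap is
\[
|\braket{a}{\psi^B_{a'|x'}}|=\sqrt{\lambda_a}\,|e^{-\mathrm i\phi^a_{a'|x'}}|=\sqrt{\lambda_a}\le\sqrt{\lambda_{max}},
\]
which is immediate.

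\emph{Case $x\neq1$.} Here both $x,x'\ge2$ and $x\neq x'$. Writing $\theta_j=\phi^j_{a|x}-\phi^j_{a'|x'}$, the overlap is $|\sum_j\lambda_j e^{\mathrm i\theta_j}|$. The key input is mutual unbiasedness of Alice's bases $x$ and $x'$:
\[
|\braket{A_{a|x}}{A_{a'|x'}}|=\tfrac1d\Bigl|\sum_j e^{\mathrm i\theta_j}\Bigr|=\tfrac1{\sqrt d},
\]
so $|\sum_j e^{\mathrm i\theta_j}|=\sqrt d$. Fix any reference value $\lambda\in\{\lambda_j\}_j$ and split $\lambda_j=(\lambda_j-\lambda)+\lambda$. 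The triangle inequality then gives
\[
\Bigl|\sum_j\lambda_je^{\mathrm i\theta_j}\Bigr|\le\sum_j|\lambda_j-\lambda|+\lambda\Bigl|\sum_je^{\mathrm i\theta_j}\Bigr|=\sum_j|\lambda_j-\lambda|+\lambda\sqrt d .
\]
Minimizing over $\lambda$ yields the first bound.

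For the simplified bound we choose $\lambda=\lambda_{min}$. Then $\sum_j|\lambda_j-\lambda_{min}|=\sum_j(\lambda_j-\lambda_{min})=1-d\lambda_{min}$, so the bound becomes $1-d\lambda_{min}+\sqrt d\lambda_{min}=1-(d-\sqrt d)\lambda_{min}$. This is at least the minimum, which gives the final inequality.

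There is no real obstacle. The only point needing care is to use the MUB relation between bases $x$ and $x'$ (both different from the Schmidt basis), rather than their relation to $x=1$, to obtain the exact value $\sqrt d$ for the phase sum. One should also check that the phase conventions, in particular the complex conjugation in the post-measurement states, leave $|\sum_j e^{\mathrm i\theta_j}|$ unchanged; they do, since conjugation does not change the modulus.
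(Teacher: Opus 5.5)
Your proof is correct and follows essentially the same route as the paper: for $x=1$ you compute the overlap $\sqrt{\lambda_a}\le\sqrt{\lambda_{max}}$ directly, and for $x\neq 1$ you split $\lambda_j=(\lambda_j-\lambda)+\lambda$, apply the triangle inequality, use unbiasedness of bases $x$ and $x'$ to get $|\sum_j e^{\mathrm i\theta_j}|=\sqrt d$, and finally choose $\lambda=\lambda_{min}$. The only difference is that you restrict $\lambda$ to $\{\lambda_j\}_j$ from the start; this suffices for the stated bound and avoids needing the paper's remark that the optimum over all real $\lambda$ lies in that set.
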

\begin{proof}
The proof is a straightforward calculation. 
For the first measurement $x=1$ we get
\begin{align}
   \forall \,x'>1\,,\qquad |\braket*{ \psi^B_{a|x=1}}{ \psi^B_{a'|x'}}|=\sqrt{\lambda_{a}}\leq \sqrt{\lambda_{max}} \, .
\end{align}
For the other measurements we find
\begin{equation}\label{eq:scalar_product_MUB}
   \forall \,x'>x\,,x\neq 1\,, \quad \braket*{ \psi^B_{a|x}}{ \psi^B_{a'|x'}} =\sum_j \lambda_j \ \mathrm e^{\mathrm i(\phi^j_{a|x}-\phi^j_{a'|x'})}=\sum_j (\lambda_j-{\lambda}) \ \mathrm e^{\mathrm i(\phi^j_{a|x}-\phi^j_{a'|x'})}+\sum_j {\lambda} \ \mathrm e^{\mathrm i(\phi^j_{a|x}-\phi^j_{a'|x'})}\,,
\end{equation}
where ${\lambda}\in \mathbb{R}$ can be chosen arbitrarily. 
Using the fact that $|\sum_i z_i|\leq \sum_i |z_i|$ for complex numbers $z_i\in \mathbb{C}$ we obtain
\begin{equation}
   \forall \,x'>x\,,x\neq 1\,, \quad |\braket*{ \psi^B_{a|x}}{ \psi^B_{a'|x'}}|
    \leq \sum_j |\lambda_j-{\lambda}| +|{\lambda}| \bigg|\sum_j \mathrm  e^{\mathrm i(\phi^j_{a|x}-\phi^j_{a'|x'})}\bigg|.
\end{equation}
Writing $e^{\mathrm i\phi^j_{a|x}}= \sqrt{d} \braket*{j}{A_{a|x}}$, we get
\begin{equation}
     \forall \,x'>x\,,x\neq 1\,, \quad |\braket*{ \psi^B_{a|x}}{ \psi^B_{a'|x'}}|
    \leq \sum_j |\lambda_j-{\lambda}| +d \, \, |{\lambda}| \, \left|\braket*{A_{a'|x'}}{A_{a|x}}\right| = \sum_j |\lambda_j-{\lambda}| +|{\lambda}| \sqrt{d}\,,
\end{equation}
where we used $|\braket*{A_{a'|x'}}{A_{a|x}}| = 1/\sqrt{d}$ for MUBs in the final step.
Since ${\lambda}$ is arbitrary, we can minimize this expression with respect to ${\lambda}\in \mathbb{R}$ to obtain the best bound. It is not hard to see that we can restrict ${\lambda}$ to be non-negative. We will further show in the next section that the best bound is attained when $\lambda \in \{\lambda_j\}_{j}$. A convenient choice for $\lambda$ (but not always the best one) is to choose the smallest Schmidt coefficient $\lambda=\lambda_{min}$. This leads to the simple expression in Eq.~\eqref{appeq:overlap_bounds}, i.e,
\begin{equation}\label{appeq:simple_overlap_MUB}
    \min_{\lambda \in \{\lambda_j\}_{j}} \left[\sum_j |\lambda_j-{\lambda}| +{\lambda} \sqrt{d}\right]\leq\sum_j |\lambda_j-\lambda_{min}| +\lambda_{min} \sqrt{d}=\sum_j (\lambda_j-\lambda_{min}) +\lambda_{min} \sqrt{d}=1-(d-\sqrt{d})\lambda_{min} \,,
\end{equation}
where we have used the fact that $\lambda_j\geq \lambda_{min}$ and $\sum_j \lambda_j=1$.
\end{proof}

\section{Proof of Theorem~\ref{maintheorem}}\label{app:main_thm}
Here we prove a stronger version of Theorem~\ref{maintheorem} stated below.
\begin{theorem*}
    Suppose there are at least $n$ MUBs in dimension $d$. Then, a state $\ket{ \psi}_{AB}=\sum_{i=0}^{d-1} \sqrt{\lambda_i} \ket{ii}$ is fully nonlocal if
\begin{equation}\label{eq:local_majorization}
     \frac{2}{n}\lambda_{max}+\frac{n-2}{n}\left(\min_{\lambda \in \{\lambda_j\}_{j}} f({\lambda})\right)^2\leq \frac{n-2}{2n-2} \,,
\end{equation}
where, $f({\lambda})=\sum_j |\lambda_j-{\lambda}| +{\lambda} \sqrt{d}$. 
\end{theorem*}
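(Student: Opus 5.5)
The plan is to combine Theorem~\ref{thm:antidistinguishable} with the Frobenius-norm criterion of Theorem~\ref{thmJohnston}, using the overlap bounds from the Lemma of App.~\ref{app:overlap_bounds}. Alice measures in $n$ MUBs, with the first one equal to the Schmidt basis as in Eq.~\eqref{appeq:MUB_measurements}. The post-measurement states on Bob's side are then the pure states of Eq.~\eqref{appeq:postmeasurement_states_MUB}. By Theorem~\ref{thm:antidistinguishable}, it suffices to show that for every choice $\boldsymbol{\alpha}=[\alpha_1,\dots,\alpha_n]$ the $n$ states $\{\ket*{\psi^B_{\alpha_x|x}}\}_{x\in[n]}$ are antidistinguishable. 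For this I will check condition~\eqref{eq:johnston_condition}, i.e.\ $\norm{G}_F^2\le n^2/2$.

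First I bound the Gram matrix $G$ entrywise. The diagonal contributes $n$, since all states are normalized ($\sum_j\lambda_j=1$). The off-diagonal pairs $(1,x')$ and $(x',1)$ with $x'\ne1$ number $2(n-1)$. Each has $|G|^2=\lambda_{\alpha_1}\le\lambda_{max}$; this is sharper than the squared bound from the Lemma and uses the same computation. The remaining pairs $(x,x')$ with $x\neq x'$ and both different from $1$ number $(n-1)(n-2)$. By the Lemma, each satisfies $|G|\le\min_{\lambda\in\{\lambda_j\}}f(\lambda)$, where $f(\lambda)=\sum_j|\lambda_j-\lambda|+\lambda\sqrt d$. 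Writing $F\coloneqq\min_{\lambda}f(\lambda)$, this gives
\[
\norm{G}_F^2\le n+2(n-1)\lambda_{max}+(n-1)(n-2)F^2 .
\]

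Requiring the right-hand side to be at most $n^2/2$ and dividing by $n(n-1)$ yields
\[
\frac{2}{n}\lambda_{max}+\frac{n-2}{n}F^2\le \frac{n^2/2-n}{n(n-1)}=\frac{n-2}{2(n-1)},
\]
which is exactly Eq.~\eqref{eq:local_majorization}. So under the hypothesis every set $\rho^B_{\boldsymbol{\alpha}}$ is antidistinguishable, and Theorem~\ref{thm:antidistinguishable} gives full nonlocality.

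The main point needing care is the Lemma itself, specifically the claim that the minimum of $f$ over $\lambda\in\mathbb R$ is attained on $\{\lambda_j\}$. However, the bound above holds for any fixed real $\lambda$, so restricting the minimum to $\{\lambda_j\}$ only weakens it and no extra argument is needed. Theorem~\ref{maintheorem} then follows by taking $\lambda=\lambda_{min}$, for which $f(\lambda_{min})=1-(d-\sqrt d)\lambda_{min}$. Its simplified form~\eqref{simplecorrolary} follows because, if $\lambda_{max}$ and $F^2$ are each at most $\frac{n-2}{2n-2}$, then the convex combination with weights $\frac2n,\frac{n-2}{n}$ is also at most $\frac{n-2}{2n-2}$.
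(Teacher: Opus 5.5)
Your proposal is correct and follows the paper's proof essentially step for step. Alice measures in $n$ MUBs with the first one the Schmidt basis. The overlap lemma bounds the squared Frobenius norm of every Gram matrix by $n+2(n-1)\lambda_{max}+(n-1)(n-2)F^2$. Johnston's criterion together with Theorem~\ref{thm:antidistinguishable} then gives full nonlocality.

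Your remark that restricting the minimum to $\lambda\in\{\lambda_j\}_j$ needs no further justification is also right. The bound $|G_{xx'}|\le f(\lambda)$ holds for each such (nonnegative) $\lambda$, so the paper's convexity argument only shows that this restriction loses nothing compared to minimizing over all $\lambda\ge 0$.
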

Substituting ${\lambda}=\lambda_{min}$ as in Eq.~\eqref{appeq:simple_overlap_MUB} we recover Eq.~\eqref{complicatedcorrolary} in the main text.
\begin{proof}
Let's suppose that Alice measures along the $n$ MUBs as in Sec.~\ref{sec:sufficient_condition_fnl_mub} so that the post-measurement states are given by Eq.~\eqref{eq:postmeasurement_states_MUB} whose overlaps satisfy Eq.~\eqref{appeq:overlap_bounds}. The main idea is to combine these bounds on the overlaps with Theorem~\ref{thmJohnston} to obtain a sufficient condition for antidistinguishability, which then implies full nonlocality due to Theorem~\ref{thm:antidistinguishable}. Consider the Gram matrix $G$ of the set of post-measurement states $\rho^B_{\boldsymbol{\alpha}} \coloneqq \{ \ket*{\psi^B_{a=\alpha_x|x}}\}_{x\in[n]}$ for any arbitrary $\boldsymbol{\alpha} = [\alpha_x]_{x\in[n]}$. The square of its Frobenius norm $(\norm{G}_F)^2$ is given by
\begin{equation}
    \begin{aligned}
        (\norm{G}_F)^2 &\coloneqq \sum_{x=1}^n \sum_{x'=1}^n |\braket*{ \psi^B_{a=\alpha_x|x}}{ \psi^B_{a=\alpha_{x'}|x'}}|^2 \\
        &= \sum_{x=1}^n |\braket*{ \psi^B_{a=\alpha_x|x}}{ \psi^B_{a=\alpha_{x}|x}}|^2 + 2 \sum_{x=1}^n \sum_{x'>x} |\braket*{ \psi^B_{a=\alpha_x|x}}{ \psi^B_{a=\alpha_{x'}|x'}}|^2\\
        &= n + 2\sum_{x'=2}^n |\braket*{ \psi^B_{a=\alpha_1|1}}{ \psi^B_{a=\alpha_{x'}|x'}}|^2 + 2 \sum_{x=2}^n \sum_{x'>x} |\braket*{ \psi^B_{a=\alpha_x|x}}{ \psi^B_{a=\alpha_{x'}|x'}}|^2\,.
    \end{aligned}
\end{equation}
We can now use Eq.~\eqref{appeq:overlap_bounds} to bound the second and third term to write
\begin{equation}
    (\norm{G}_F)^2 \le n + 2 \lambda_{max} (n-1)+ (n-1)(n-2) \left(\min_{{\lambda}\in \mathbb{R}_+} f({\lambda})\right)^2\,.
\end{equation}
Therefore, from Theorem~\ref{thmJohnston}, all the sets of post-measurement states $\rho^B_{\boldsymbol{\alpha}}$ are antidistinguishable if
\begin{equation}
    n + 2(n-1)\lambda_{max} + (n-1)(n-2) \left(\min_{{\lambda}\in \mathbb{R}_+} f({\lambda})\right)^2\leq \frac{n^2}{2}\,.
\end{equation}
Subtracting $n$ on both sides and then dividing by $n(n-1)$ gives the desired bound. It remains to be shown that the best bound is obtained when $\lambda \in \{\lambda_j\}_{j}$.
Indeed, instead of choosing ${\lambda}=\lambda_{min}$ as in Theorem~\ref{maintheorem}, we can directly minimize the second term above with respect to ${\lambda}$. Suppose $\lambda_i$ is ordered so that $\lambda_{0}\le \lambda_{1}\le...\le \lambda_{d-1}$. The function $f({\lambda})$
is convex and piecewise linear with slope $ f'({\lambda})= \sum_j \text{sgn}(\lambda_j-{\lambda}) +\sqrt{d} = 2k-d+\sqrt{d}$ if $\lambda_k \le {\lambda} \le\lambda_{k+1}$.
Notice that the slope changes only when ${\lambda} \in \{\lambda_j\}_j$. Clearly, the minimal value of $f({\lambda})$ is attained at the point $ {\lambda}$ where the slope changes sign from negative to positive. This happens when ${\lambda}=\lambda_{k}$ for $k =\lceil \frac{d-\sqrt{d}}{2}\rceil$ and gives us the value
\begin{equation}
\begin{aligned}
   \min_{{\lambda}\in \mathbb{R}_+} f({\lambda}) &= f({\lambda_k}) = \sum_j |\lambda_j-\lambda_{k}| +\lambda_{k} \sqrt{d}
    = \sum_{j \le {k}} (\lambda_{k}-\lambda_j) +\sum_{j > k} (\lambda_j-\lambda_{k}) +\lambda_{k} \sqrt{d}\\
    &= k \lambda_k - \sum_{j \le {k}} \lambda_j + \sum_{j > k} \lambda_j - (d-k)\lambda_k + \lambda_{k} \sqrt{d}
    =(2k-d+\sqrt{d}\,)\lambda_{k}+1 - 2\sum_{j \leq k} \lambda_j\,.
\end{aligned}
\end{equation}
This completes the proof.
\end{proof}

\section{Full nonlocality of partially entangled two-qutrit states}
\label{appsec:qutrits_fnl}
We have seen that MUBs are particularly useful for proving full nonlocality. Nevertheless, in dimensions $d=3$ there are $4$ MUBs and our criterion is not sufficient to demonstrate full nonlocality of any partially entangled state. Here, we show that if we instead perfrom the $5$ measurements employed in Ref.~\cite{cabello_simplest_2025_2}, the resulting post-measurement can be shown to be antidistinguishable even for some partially entangled two-qutrit states, thus proving that they exhibit full nonlocality. Alice's $5$ measurement bases are given by the columns of the following unitary matrices
\begin{equation}
    U_1=\mathds 1, U_2=\frac{1}{\sqrt{3}} \begin{bmatrix} 1 & 1 & \omega^2 \\ \omega & 1 & \omega \\ \omega^2 & 1 & 1 \end{bmatrix}, U_3=\frac{1}{\sqrt{3}} \begin{bmatrix} 1 & 1 & \omega^2 \\ \omega & 1 & \omega \\ -\omega^2 & -1 & -1 \end{bmatrix}, U_4=\frac{1}{\sqrt{3}} \begin{bmatrix} 1 & 1 & \omega^2 \\ -\omega & -1 & -\omega \\ \omega^2 & 1 & 1 \end{bmatrix}, U_5=\frac{1}{\sqrt{3}} \begin{bmatrix} -1 & -1 & -\omega^2 \\ \omega & 1 & \omega \\ \omega^2 & 1 & 1 \end{bmatrix}\,,
\end{equation} 
where $\omega =\mathrm e^{\frac{2\pi \mathrm i}{3}}$. Explicitly, Alice's basis vectors are $\ket*{A_{a|x}}=U_x \ket{a}$. Notice that the basis elements $\ket*{A_{a|x}}$ can be written as in Eq.~\eqref{appeq:MUB_measurements} with $d=3$ and $\mathrm e^{\mathrm i\phi^j_{a|x}}/\sqrt{3} = \braket{j}{A_{a|x}}  = \braket{j|U_x}{a}$ $(x\neq1)$. Hence, the post-measurement states on Bob's side are given by Eq.~\eqref{appeq:postmeasurement_states_MUB}. We can now follow the same steps as in App.~\ref{app:overlap_bounds}, and put $\lambda=\lambda_{min}$ in the final step to obtain 
\begin{equation}\label{appeq:overlap_qutrit}
    \forall  x'>x\,,\quad |\braket*{ \psi^B_{a|x}}{ \psi^B_{a'|x'}}|\leq
        \begin{cases}
        \sqrt{\lambda_{max}}  &\text{if}\quad x=1\,, \\ 
               1 - 3\lambda_{min} \left(\,1-|\braket*{A_{a'|x'}}{{A_{a|x}}}| \,\right) &\text{if}\quad  x \neq 1\,.
    \end{cases}
\end{equation}
Further, it is straightforward to compute the overlaps $|\braket*{A_{a'|x'}}{{A_{a|x}}}|$ above. Writing $\braket*{A_{a'|x'}}{{A_{a|x}}} = \bra{a}U_x^\dagger U_{x'}\ket{a'}$, we notice that when $x\neq1$, $x'>x$, any two unitaries $U_x$ and $U_{x'}$ are related by a flip of the sign of either one or two rows. Hence, we can always write $U_{x'} = D U_{x}$, where $D = \id - 2\ketbra{j}$ or $2\ketbra{j}- \id$ for some $j$. Suppose $D = \id - 2\ketbra{j}$. Then,
\begin{equation}
    \forall \,x'>x\,,x\neq 1\,, \quad \braket*{A_{a'|x'}}{{A_{a|x}}} = \bra{a}U_x^\dagger D U_{x} \ket{a'} = \braket{a}{a'} - 2 \bra{a}U_x^\dagger \ket{j} \bra{j}U_x \ket{a'}\,.
\end{equation}
Taking its absolute value, and using $\lvert \bra{i}U_x\ket{j} \rvert = 1/\sqrt{3}$, we obtain
\begin{equation}\label{eq:scalar_product_five_elements_set}
    \forall \,x'>x\,,x\neq 1\,, \quad |\braket*{A_{a|x}}{A_{a'|x'}}| =\delta_{aa'} \frac{1}{3} + (1-\delta_{aa'}) \frac{2}{3} \,.
\end{equation}
A similar computation for $D = 2\ketbra{j} - \id $ yields the same overlap as above. Hence, the bound~\eqref{appeq:overlap_qutrit} becomes
\begin{equation}\label{appeq:overlap_qutrit2}
    \forall  x'>x\,,\quad |\braket*{ \psi^B_{a|x}}{ \psi^B_{a'|x'}}|\leq
        \begin{cases}
        \sqrt{\lambda_{max}}  &\text{if}\quad x=1\,, \\ 
               1 - 2\lambda_{min}  &\text{if}\quad  x \neq 1\,, a=a'\,,\\
               1 - \lambda_{min}  &\text{if}\quad  x \neq 1\,, a\neq a'\,.
    \end{cases}
\end{equation}
Using Theorem~\ref{thmJohnston}, we can now provide a sufficient criterion for full nonlocality in dimension $d=3$.
\begin{result*}
    A bipartite qutrit state $\ket{ \psi}_{AB}=\sum_{i=0}^2\sqrt{\lambda_{i}}\ket{ii}$ is fully nonlocal if
    \begin{equation}\label{eq:qutrits_sufficient_condition_fnl}
     0.328 \approx \frac{22-\sqrt{259}}{18} \leq \lambda_{min} \leq \frac{1}{3}\, . 
    \end{equation}
\end{result*}
\begin{proof}
    The proof mimicks the one in App.~\ref{app:main_thm}. Consider the Gram matrix $G$ of the set of post-measurement states $\rho^B_{\boldsymbol{\alpha}} \coloneqq \{ \ket*{\psi^B_{a=\alpha_x|x}}\}_{x\in[5]}$ for any arbitrary $\boldsymbol{\alpha} = [\alpha_x]_{x\in[5]}$. The square of its Frobenius norm $(\norm{G}_F)^2$ is given by 
\begin{equation}
    \begin{aligned}
        (\norm{G}_F)^2 
        &= 5 + 2\sum_{x'=2}^5 |\braket*{ \psi^B_{a=\alpha_1|1}}{ \psi^B_{a=\alpha_{x'}|x'}}|^2 + 2 \sum_{x=2}^5 \sum_{x'>x} |\braket*{ \psi^B_{a=\alpha_x|x}}{ \psi^B_{a=\alpha_{x'}|x'}}|^2\,.
    \end{aligned}
\end{equation}
We now use Eq.~\eqref{appeq:overlap_qutrit2} to upper bound the second and third term above. For the third term, notice that $(1 - 2\lambda_{min}) < (1 -\lambda_{min})$, and hence the worst case is $\alpha_x \neq \alpha_{x'}$. However, since there are only $3$ possible outcomes for Alice, i.e., $\alpha_x,\alpha_{x'} \in \{0,1,2\}$, there must be at least one pair $(x,x')$ in the summation such that $\alpha_x = \alpha_{x'}$. Therefore,
\begin{equation}
        (\norm{G}_F)^2\leq 5 + 8 \lambda_{max} + 2 \cdot (1-2\lambda_{min})^2+2\cdot 5 \cdot (1-\lambda_{min})^2=18\lambda_{min}^2-28\lambda_{min}+8\lambda_{max}+17.
\end{equation}
    Using $\lambda_{max}\leq 1-2\lambda_{min}$, and imposing the condition $(\norm{G}_F)^2\leq 25/2$ from Theorem~\ref{thmJohnston}, we get the sufficient condition $ 36\lambda_{min}^2-88\lambda_{min}+25\leq 0$ for full nonlocality, which has the solution given in Eq.~\eqref{eq:qutrits_sufficient_condition_fnl}.
\end{proof}
\section{Proof of Theorem~\ref{thm:local_content}}\label{app:localcontent}
Here we give a proof of Theorem~\ref{thm:local_content} which provides a fundamental tool to obtain a lower bound on the local content of entangled states.
\begin{theorem*}
    Consider a $d$-dimensional entangled state $\ket{ \psi}=\sum_{i=0}^{d-1}\sqrt{\lambda_i}\ket{ii}$, where $\lambda_0\geq\lambda_1\geq\cdots\geq\lambda_{d-1}$. If Alice and Bob perform rank-$1$ measurements with at most $m_A$ and $m_B$ outcomes respectively on this state, then, for every measurement choice of Alice and Bob, there exists a POVM element $A = \ketbra{A}$ for Alice and $B=\ketbra{B}$ for Bob such that
    \begin{equation}\label{appeq:local_content_bound}
     \lvert \braket{ \psi}{A\otimes B} \rvert \ge \frac{\sqrt{\lambda_0}-\sqrt{\lambda_1}\sqrt{(m_A-1)(m_B-1)}}{\sqrt{m_A m_B}}\,.
    \end{equation}
\end{theorem*}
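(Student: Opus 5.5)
Write Alice's rank-1 POVM as $\{\ketbra{A_a}\}_{a=1}^{m_A}$ with $\sum_a \ketbra{A_a}=\id$, and similarly $\{\ketbra{B_b}\}_{b=1}^{m_B}$ for Bob (unused outcomes are padded with zero vectors). The plan is a pigeonhole argument on the largest Schmidt component followed by a Cauchy--Schwarz bound on the rest. Decompose
\[
\braket{\psi}{A_a\otimes B_b}=\sqrt{\lambda_0}\,\braket{0}{A_a}\braket{0}{B_b}+\sum_{i\ge1}\sqrt{\lambda_i}\,\braket{i}{A_a}\braket{i}{B_b}.
\]
Abbreviate $u_a=\braket{0}{A_a}$ and $v_b=\braket{0}{B_b}$. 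Completeness gives $\sum_a|u_a|^2=1$ and $\sum_b|v_b|^2=1$. For $i\ge1$ put $s_a^2=\sum_{i\ge1}|\braket{i}{A_a}|^2=\|A_a\|^2-|u_a|^2$. Then $\sum_a s_a^2=d-1$, and also $\|A_a\|^2\le1$, so $s_a^2\le 1-|u_a|^2$. The same holds for Bob with $t_b$.

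The first step bounds the tail. Using $\lambda_i\le\lambda_1$ and Cauchy--Schwarz,
\[
\Big|\sum_{i\ge1}\sqrt{\lambda_i}\braket{i}{A_a}\braket{i}{B_b}\Big|\le\sqrt{\lambda_1}\,s_a t_b\le\sqrt{\lambda_1}\sqrt{(1-|u_a|^2)(1-|v_b|^2)}.
\]
This gives
\[
|\braket{\psi}{A_a\otimes B_b}|\ge\sqrt{\lambda_0}\,|u_a||v_b|-\sqrt{\lambda_1}\sqrt{(1-|u_a|^2)(1-|v_b|^2)}.
\]

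The second step chooses the pair $(a,b)$. I take $a$ maximizing $|u_a|$ and $b$ maximizing $|v_b|$. Since there are at most $m_A$ terms summing to 1, $|u_a|^2\ge 1/m_A$, and likewise $|v_b|^2\ge 1/m_B$. The right-hand side above increases in $|u_a|$ and in $|v_b|$, because the first term grows and the subtracted term shrinks. Substituting the minimal values therefore gives
\[
\frac{\sqrt{\lambda_0}}{\sqrt{m_Am_B}}-\sqrt{\lambda_1}\sqrt{\tfrac{(m_A-1)(m_B-1)}{m_Am_B}},
\]
which is exactly the claimed bound. Note that this needs only the pointwise bound $s_a^2\le1-|u_a|^2$, not the global identity $\sum_a s_a^2=d-1$.

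The main obstacle is to justify $\|A_a\|\le1$ and the monotonicity step. The norm bound holds because $\ketbra{A_a}\preceq\id$, which forces $\|A_a\|^2\le1$. Monotonicity is elementary once the bound is written as a function of $(|u_a|,|v_b|)\in[0,1]^2$. If the right-hand side is negative the statement is trivial. The case of projective rank-1 measurements, where $\|A_a\|=1$, is covered as a special case.
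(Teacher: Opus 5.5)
Your proof is correct and matches the paper's argument step for step. Both use pigeonhole on the $\ket{0}$-components to get $|\braket{0}{A}|^2\ge 1/m_A$ and $|\braket{0}{B}|^2\ge 1/m_B$, then the triangle inequality with $\lambda_j\le\lambda_1$ and Cauchy--Schwarz on the tail, and $\braket{A}{A}\le 1$ to bound the tail norms. The only cosmetic difference is that you phrase the final substitution as monotonicity of $\sqrt{\lambda_0}\,xy-\sqrt{\lambda_1}\sqrt{(1-x^2)(1-y^2)}$, whereas the paper bounds the two terms separately; these amount to the same step.
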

\begin{proof}
Let us begin by fixing explicitly the POVM elements $A = \ketbra{A}$ and $B=\ketbra{B}$ used in the statement of the theorem. For any $m_A-$outcome, rank-$1$ POVM $\{A_1,\dots,A_{m_A}\}$ of Alice, we must have
\begin{equation}
       \bra{0}\left(\sum_{l=1}^{m_A} \ketbra{A_l}\right)\ket{0} = \bra{0} \id \ket{0} \implies \sum_{l=1}^{m_A} \lvert\braket{0}{A_l}\rvert^2 = 1\,.
\end{equation}
Hence, for some POVM element $A \in \{A_1,\dots,A_{m_A}\}$, $\lvert\braket{0}{A}\rvert^2 \ge 1/m_A$. Similarly, for some POVM element $B \in \{B_1,\dots,B_{m_B}\}$ of Bob, $\lvert\braket{0}{B}\rvert^2 \ge 1/m_B$. Let us denote by $a_j$ the $j^{\text{th}}$ component of $\ket{A}$ in the Schmidt basis of $\ket{ \psi}$, i.e., $a_j=\braket{j}{A}$, and by $b_j$ the $j^{\text{th}}$ component of $\ket{B}$, i.e., $b_j=\braket{j}{B}$. Hence,
\begin{equation}\label{eq:local_content_outcome_bound}
    |a_0|^2 \ge \frac{1}{m_A} \quad \text{and} \quad |b_0|^2 \ge \frac{1}{m_B}\,.
\end{equation}
We now consider the scalar product
\begin{equation}
    \begin{aligned}
        \braket*{ \psi}{A\otimes B}&= \sum_{j=0}^{d-1}\sqrt{\lambda_j}\,\braket{j}{A} \braket{j}{B} = 
        \sqrt{\lambda_0}\, a_0b_0+\sum_{j=1}^{d-1} \sqrt{\lambda_j}\, a_j b_j \,,
    \end{aligned}
\end{equation}
and take its absolute value, which satisfies 
\begin{equation}\label{eq:local_content_ineq}
    \lvert \braket{ \psi}{A\otimes B}\rvert \ge \sqrt{\lambda_0}\left|a_0b_0\right|-\bigg|\sum_{j=1}^{d-1} \sqrt{\lambda_j} \, a_j b_j \bigg|\,.
\end{equation}
Furthermore,
\begin{equation}\label{eq:local_content_ineq2}
 \bigg|\sum_{j=1}^{d-1} \sqrt{\lambda_j}\, a_j b_j \bigg| \le \sum_{j=1}^{d-1} \sqrt{\lambda_j} \,\lvert a_j b_j\rvert  \le \sqrt{\lambda_1}\, \sum_{j=1}^{d-1} \lvert a_j b_j\rvert \,,
\end{equation}
where in the final step we have used the fact that $\lambda_1 = \max\{\lambda_1,\dots,\lambda_{d-1}\}$. Defining $\ket{u} \coloneqq (|a_1|,\dots,|a_{d-1}|)$, $\ket{v}\coloneqq (|b_1|,\dots,|b_{d-1}|)$ and applying the Cauchy--Schwarz inequality $\lvert \braket{u}{v} \rvert \le \sqrt{\braket{u} \braket{v}}$ we get,
\begin{align}\label{eq:local_content_ineq3}
\sqrt{\lambda_1} \,\sum_{j=1}^{d-1} \,\lvert a_j b_j\rvert\leq \sqrt{\lambda_1} \left(\sum_{i=1}^{d-1} |a_i|^2\right)^{1/2} \left(\sum_{j=1}^{d-1} |b_j|^2\right)^{1/2} .
\end{align}
Inequalities~\eqref{eq:local_content_ineq2},\eqref{eq:local_content_ineq3} imply that we can relax the inequality~\eqref{eq:local_content_ineq} as
\begin{equation}\label{eq:local_content_ineq4}
    \lvert \braket{ \psi}{A\otimes B}\rvert \ge \sqrt{\lambda_0}\left|a_0b_0\right|-\sqrt{\lambda_1} \left(\sum_{i=1}^{d-1} |a_i|^2\right)^{1/2} \left(\sum_{j=1}^{d-1} |b_j|^2\right)^{1/2}.
\end{equation}
Moreover, using Eq.~\eqref{eq:local_content_outcome_bound} we obtain
\begin{equation}\label{appeq:local_content_trace_bound_A}
    \sum_{j=1}^{d-1} |a_j|^2 = \left(\sum_{j=0}^{d-1} |a_j|^2\right) - |a_0|^2 = \braket{A} - |a_0|^2 \le \braket{A}-\frac{1}{m_A} \le \frac{m_A-1}{m_A} \,,
\end{equation}
where in the final step we have used $\braket{A} \le 1$ since the trace of a rank-$1$ POVM element is always smaller than $1$.
Similarly, on Bob's side we obtain
\begin{equation}\label{appeq:local_content_trace_bound_B}
    \sum_{j=1}^{d-1} |b_j|^2 \leq \frac{m_B-1}{m_B}\, . 
\end{equation}
Substituting the RHS of inequalities~\eqref{appeq:local_content_trace_bound_A},\eqref{appeq:local_content_trace_bound_B} in~\eqref{eq:local_content_ineq4}, and using $|a_0 b_0| \ge 1/\sqrt{m_A m_B}$, we obtain the desired result~\eqref{appeq:local_content_bound}.

\end{proof}

\end{document}